\let\MYcaption\@makecaption
\let\@makecaption\MYcaption
\theoremstyle{plain}
\newtheorem{corollary}{Corollary}
\newtheorem{lemma}{Lemma}
\newtheorem{proposition}{Proposition}
\newtheorem*{theorem*}{Theorem}
\newtheorem{assumption*}{Assumption}
\declaretheorem[name=Theorem]{thm}
\theoremstyle{definition}
\newtheorem{remark}{Remark}
\newtheorem{definition}{Definition}
\newtheorem{assumption}{Assumption}
\newtheorem*{problem*}{Problem}
\newcommand{\myvar}[1]{#1}
\newcommand{\myset}[1]{\mathcal{#1}}
\DeclareMathOperator*{\argmin}{argmin}
\title{ \LARGE \bf
   Secure Safety Filter Design for Sampled-data Nonlinear Systems \\ under Sensor Spoofing Attacks
}
\author{
Xiao Tan, Pio Ong, Paulo Tabuada,  and Aaron D. Ames%
\thanks{This work is supported by TII under project \#A6847.}
\thanks{Xiao Tan, Pio Ong, and Aaron D. Ames are with the the Department of Mechanical and Civil Engineering, California Institute of Technology, Pasadena, CA 91125, USA (Email: {\tt\small xiaotan, pioong, ames@caltech.edu}).}    
\thanks{Paulo Tabuada is with the Department of Electrical and Computer Engineering at University of California, Los Angeles, CA 90095, USA (Email: {\tt\small  tabuada@ucla.edu}).} 
}
\begin{document}

\maketitle
\thispagestyle{plain}
\pagestyle{plain}

\begin{abstract}

This paper presents a secure safety filter design for nonlinear systems under sensor spoofing attacks. Existing  approaches primarily focus on linear systems which limits their applications in real-world scenarios. In this work, we extend these results to nonlinear systems in a principled way. We introduce exact observability maps that abstract specific state estimation algorithms and extend them to a secure version capable of handling sensor attacks. Our generalization also applies to the relaxed observability case, with slightly relaxed guarantees. More importantly, we propose a \textit{secure safety filter} design in both exact and relaxed cases, which incorporates secure state estimation and a control barrier function-enabled safety filter. The proposed approach provides theoretical safety guarantees for nonlinear systems in the presence of sensor attacks. We numerically validate our analysis on a unicycle vehicle equipped with redundant yet partly compromised sensors.
  \end{abstract}

\section{Introduction}
Safety of cyber-physical systems (CPS) has gained significant attention in the control community in recent years. Much of the existing literature focuses on designing control strategies to enforce safety specifications \cite{Ames2017,yu2024continuous}. However, a fundamental assumption in many of these works is that the system state is accessible to the the controller. In reality, as CPS 
 closely interact with external environments, they are susceptible to cyber-physical attacks that can compromise sensor measurements. Ensuring CPS safety under sensor attacks poses a critical challenge.

 A wide range of attack models have been studied in the literature, including, to name a few,  denial-of-service \cite{de2015input}, replay attacks \cite{zhu2013performance}, man-in-the-middle \cite{smith2015covert}, and false data injection \cite{mo2010false}. In this work, we focus on sensor spoofing attacks, where the attacker injects arbitrary signals into sensor measurements, without restrictions on their temporal, statistical, or bounded properties. The only restriction is that it can only attack at most certain number of sensors.
 Such attacks have been observed in real-world scenarios, including GPS spoofing in a drone incident \cite{rq170incident},  speed encoder attack on an vehicle\cite{shoukry2013non}, and acoustic interference with inertial measurement units (IMUs) on a robot \cite{tu2018injected}. If not properly handled,  corrupted measurements lead to incorrect state estimation, which compromises downstream planning and control tasks and may eventually put system safety at risk.

 A key theoretical challenge in this setting  is the secure state reconstruction problem -- determining when and how the system state can be recovered despite corrupted sensor measurements. Early results in this regard reported in \cite{fawzi2014secure,shoukry2015event} identify sparse observability property as a fundamental condition for exact secure state reconstruction for discrete-time linear systems. An equivalent condition is independently discovered in \cite{chong2015observability} for continuous-time linear systems. Our recent extension   \cite{tan2024safety} generalizes these results to allow for a finite set of plausible states under weaker sparse observability condition. All of these results, however, are restricted to linear systems. The only work that addresses the secure state reconstruction problem for nonlinear systems is \cite{shoukry2015secure}, which nevertheless assumes the system to be differentially flat.

Beyond secure state estimation, another important theoretical question is about system safety under sensor attacks. Our recent works \cite{tan2024safety,tan2025secure} introduce a \textit{secure safety filter}, which integrates secure state reconstruction with a control barrier function-based safety filter \cite{wabersich2023data}. This approach was experimentally validated on a quadrotor \cite{tan2025secure}, where a reduced-order linear model was used to approximate horizontal motion. However, extending this methodology to general nonlinear systems may be challenging and conservative.

\begin{figure}[t]
    \centering
    \includegraphics[width=\linewidth]{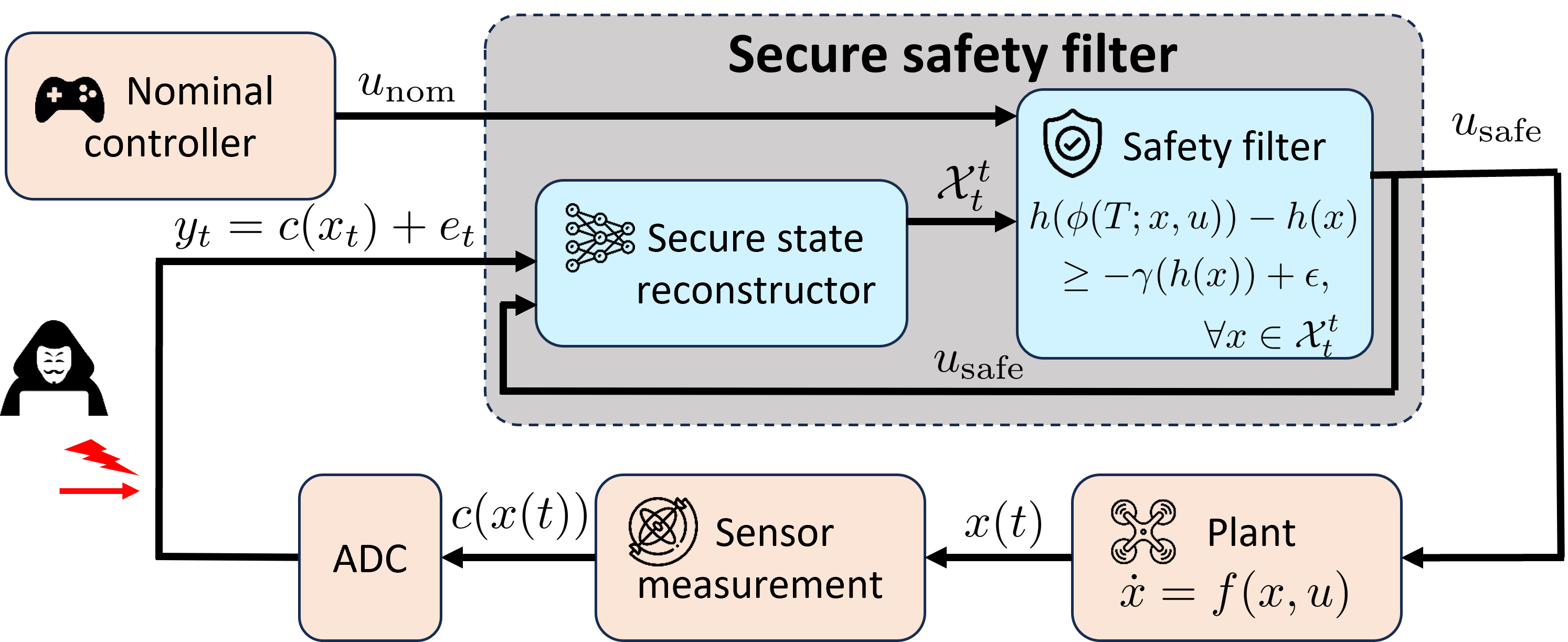}
    \caption{Secure safety filter diagram. The secure safety filter consists of a secure state reconstructor and a safety filter. The former takes in the input-output data, and calculates (an over-approximation of) the current plausible states. The latter then takes into account all plausible states to generate a safe control input with minimally invasive correction on the nominal input.}
    \label{fig:SSF}
\end{figure}

Several recent studies have explored safe control design for nonlinear systems subject to  sensor attacks with various assumptions. In  \cite{lin2023secondary}, a set of sensors is assumed to be attack-free, based upon which the safety filter is designed. The work in \cite{arnstrom2024data}  explicitly analyzes how stealthy sensors can  deactivate existing safety filters. It however requires an attacker capable of corrupting all sensors simultaneously. Another approach in \cite{zhang2022safe} divides sensors into groups, performs sensor anomaly detection per group, and applies a safety filter for states that pass the anomaly check. This approach requires more redundant sensors than necessary. Other works \cite{cosner2021measurement,lindemann2024learning}  propose robust safety filter designs for bounded measurement errors, but do not explicitly address adversarial attacks.  In contrast,  our work adopts a weaker sensor spoofing attack model with no explicit assumptions on the attack signals or the attacker's intention.

In this work, we propose a secure safety filter design  for general nonlinear systems under sensor spoofing attacks. The overall structure is shown in Figure \ref{fig:SSF}.  Our approach extends existing secure safety filter design \cite{tan2024safety,tan2025secure} to nonlinear systems. To achieve secure state reconstruction result for nonlinear systems, we introduce exact and relaxed observability maps that abstract specific state estimation algorithms, and generalize the corresponding the sparse observability notions.
The proposed approach provides provably safety guarantees for any nominal input and sensor spoofing attacks when the corresponding sparse observability condition and an online feasibility condition hold.

\textbf{Notation:} For $ w\in \mathbb{N}$, define $[w]:=\{1,2,\ldots,w\}$. The cardinality of a set $\myset{I}$ is denoted by $|\myset{I}|$. Given a $w\in \mathbb{N}$, a $k$-combination from $[w]$ is a subset of $[w]$ with cardinality $k$. Denote by $\mathbb{C}_{w}^k$ the set of all $k$-combinations from $[w]$. For a vector $y\in \mathbb{R}^w$ and an index set $\Gamma\subseteq[w]$, denote by $y^\Gamma$ the vector obtained by removing all entries not indexed in $\Gamma$. For a function $c:\mathbb{R}^n \to \mathbb{R}^w$ and an index set $\Gamma\subseteq[w]$, the map $c_{\Gamma}(\cdot)$ is similarly obtained by removing all the entries with indices not in $\Gamma$ from $c(\cdot)$.   A continuous function $\gamma: \mathbb{R} \to \mathbb{R}$ is an extended class $\mathcal{K}$ function if it is strictly increasing and $\gamma(0) = 0$. A ball $\mathbb{B}_\delta (z)$ of radius $\delta>0$ in $\mathbb{R}^n$ is defined as a set  $\mathbb{B}_\delta (z):= \{x\in \mathbb{R}^n: \| x - z \| \leq \delta\}$ with $z\in \mathbb{R}^{n}$.

\section{Problem Formulation}
\subsection{Nonlinear systems under sensor attacks}
Consider a continuous-time nonlinear system $\Sigma^C$
\begin{equation} \label{eq:continuous-time systems}
\Sigma^C: \left\{
\begin{aligned}
    & \dot{x}  = f(x, u) \\
    & y = c(x) + e
\end{aligned}    \right.
\end{equation}
where for $t\geq 0$, $x(t)\in \mathbb{R}^n, u(t)\in \mathbb{R}^m, y(t),  e(t)\in \mathbb{R}^p$ are the system state, the control input, the sensor measurement, and the attack signal to the system, respectively.  For digital implementation of controllers, we adopt the sampled-data control strategy where the input~$u$ is held constant between two consecutive sampling instants, that is,
\begin{equation} \label{eq:sample-and-hold input} 
    u(t) = u_k  \textup{ for } t\in [t_k, t_{k+1}).
\end{equation}
We assume a uniform sampling strategy with a sampling time $T=t_{k+1}-t_k$ for all $k\in\mathbb{N}$. We use the shorthand notations $x_k$, $y_k$, and $e_k$ for $x(t_k)$, $y(t_k)$, and $ e(t_k)$, respectively. Let $\phi(T;x,u)$ denote the end state of a trajectory of the system \eqref{eq:continuous-time systems} starting from $x$ and evolving for a time period $T$ with constant input $u$. The nonlinear system is assumed to be regular enough so that $\phi(T;x,u)$ also exists and is unique.

In this work, the nonlinear system in consideration is subject to sensor spoofing attacks. To simplify our analysis, the system is assumed to be equipped with $p$ independent sensors, whose outputs correspond to the $p$ entries of the measurement vector $y(t)$. When a sensor $i\in [p]$ is under a spoofing attack, the attacker can arbitrarily alter its measurement value $y^i(t)$ by injecting a nonzero $e^i(t)$.  Throughout this work, we assume the following attack model holds.

 \begin{assumption}[Attack model] \label{ass:s attacks}
     The attacker is omniscient, i.e., it has the complete knowledge of the system including the system states, the dynamics, and our defense strategy. However, it can only attack at most $s$ sensors, with the choices of sensors fixed. Mathematically, we require $$\left| \left\{i\in[p]:  \  \exists t\geq 0, \ e^i(t) \neq 0\right\} \right| \leq s.$$
 \end{assumption}

\subsection{System safety and secure safety filter}
For safety concerns, the nonlinear system state~$x$ is required to evolve within a safe set $\myset{C} = \{x\in \mathbb{R}^n: h(x) \geq 0\}$. One emerging tool for enforcing system safety is the control barrier function (CBF) framework \cite{Ames2017}. We assume that the safety constraint function $h$ is a zero-order CBF per the definition below.

\begin{definition}[Zero-order CBF \cite{tan2024zero}] \label{def:zocbf}
    A continuous  function $h:\mathbb{R}^n \to \mathbb{R}$ is a \emph{zero-order control barrier function}\footnote{Reference \cite{tan2024zero} defines zero-order CBFs for  state- and input-dependent safety constraints. Definition~\ref{def:zocbf} is a special case with state-only safety constraints. Although the results in this paper easily extend to input constraints, we focus on state constraints only for simplicity of presentation.}, if for all $x\in \mathbb{R}^n$, there exists an input $u\in \mathbb{R}^m$ such that:  
    \begin{equation}\label{eq:zocbf}
       h(\phi(T;x,u)) - h(x) \geq - \gamma(h(x)) + \epsilon,
    \end{equation}
where $\epsilon$ is a positive constant relating to the sampling time $T$ such that
$h(\phi(T;x,u)) \geq \epsilon $ implies $ h(\phi(t;x,u))\geq 0 $ for $t\in [0, T)$, and $\gamma(\cdot) $ is an extended class $\mathcal{K}$ function satisfying $| \gamma(s)|\leq |s|$ for $s\in \mathbb{R}$. 
\end{definition}
With the zero-order CBF, a safety filter for the sampled-data system is given by
\begin{equation} \label{eq:safety_filter}
    \begin{aligned}
       \myvar{u}_{\textup{safe}}(t) & = \argmin_u \| u - \myvar{u}_{\textup{nom}}(t_k) \| \\
       \textup{s.t. } & \eqref{eq:zocbf} \text{ holds with } x = x_k
    \end{aligned}, \textup{ for $t\in [t_k, t_{k+1})$.}
\end{equation}
This produces a safeguarding sample-and-hold control input $u_{\textup{safe}}(t)$ with a minimally invasive correction on a given nominal input $u_{\textup{nom}}(t)$ for $t\geq 0$.

Our work builds upon this zero-order CBF-based safety filter, which usually requires the system state at time $t_k$ available for feedback. Here, we design a safety filter for the sampled-data nonlinear system \eqref{eq:continuous-time systems}-\eqref{eq:sample-and-hold input} with corrupted measurements. 
For simplicity and with a slight abuse of notation, at time $t\geq 0, t\in [t_k, t_{k+1})$, we denote the input-output data from the last $(l+1)$ sampling instants  as $Y_{k-l:k} := \left(y_{k-l}, y_{k-l+1},  \ldots, y_{k}\right)$, $U_{k-l:k-1}:= \left(u_{k-l}, u_{k-l+1},  \ldots, u_{k-1}\right) $. In what follows, our proposed secure safety filter takes effects after collecting data from at least $(l+1)$ steps. We say the system~\eqref{eq:continuous-time systems} is \textit{safe} on set $\myset{C}$ with a control strategy determining $u$ if, given that the system state $x((l+1)T)$ is in $\myset{C}$, the system state $x(t)$ stays in the safety set $\myset{C}$ for all time $t\geq (l+1)T$.

In what follows, we will detail the design and the theoretical guarantees of our proposed secure state filter under two different observability assumptions.

\section{Exact Observability}
In this section, we consider a discrete-time system 
\begin{equation} \label{eq:sampled-data_system no disturbance}
\Sigma: \left\{
\begin{aligned}
    x_{k+1} & = F(x_k, u_k),  \\
     y_{k+1}  & = c(x_{k+1}) + e_{k+1}.
\end{aligned}    \right.
\end{equation}
Here  $F(x_k, u_k):=\phi(T;x_k,u_k)$ denotes the state transition map during one sampling period, which is assumed to be known. We introduce the following observability notion. 
\begin{definition}[Differential observability] \label{def:diff_observability}
    A sampled-data system in \eqref{eq:sampled-data_system no disturbance} is \emph{differentially observable} of order $l$ if, when no sensor attack is present, i.e., $ e_{k} = 0$ for all $k$, there exists a map $\mathcal{L}: \mathbb{R}^{ml}\times \mathbb{R}^{p(l+1)}\to \mathbb{R}^n$ such that
    \begin{equation}
        x_{k-l} =  \mathcal{L}\left(U_{k-l:k-1}, Y_{k-l:k}\right),
    \end{equation}
    where $x_{k-l} $ is the the solution of \eqref{eq:sampled-data_system no disturbance} at time step $k-l$.
\end{definition}

This notion is an analog to the differential observability notion for continuous-time systems \cite[Chapter 5]{bernard2019observer}, which states that the system state can be instantaneously determined using high-order time derivatives of the input and the measurement signals. The slightly stronger notion of differential flatness was utilized in a previous work \cite{shoukry2015secure} to solve the secure state reconstruction problem for discrete-time systems. Here the observability map $\mathcal{L}$ is an abstraction of any particular state estimation method that may be applicable to specific systems, for example, solving a system of equations for linear, observable discrete-time systems \cite{tan2024safety}. 

Typically, the observability map $\mathcal{L}$ does not take the possibility of attacks into account. When attacks are present, i.e., $ \exists j\in \mathbb{N}, e_j\neq 0$, the map $\mathcal{L}$ may produce a wrong state estimate because $Y_{k-l:k}$ is corrupted. A more sophisticated state estimation should consider initial states that are plausible given input-output data under sensor attacks.

\begin{definition}[Plausible initial states] \label{def:plausible state no disturbance}
     Given input-output data $(U_{k-l:k-1}, Y_{k-l:k})$ of the system \eqref{eq:sampled-data_system no disturbance}, we call  $z_{k-l}$ \emph{a plausible initial state} if there exists $\{e_j\}_{j=k-l}^k$ satisfying Assumption \ref{ass:s attacks} and the following system of equations:
     \begin{equation} \label{eq:plausible condition no disturbance}
    \begin{aligned}
        & z_{j+1} = F(z_j, u_j) ,~   j= k-l, \ldots, k-1,\\
        & y_{j} = c(z_{j}) + e_j,~   j= k-l, \ldots, k.
    \end{aligned}
\end{equation}  
  We denote the set of all plausible initial states by $\myset{X}_k^{k-l}$.
 \end{definition}

The initial state $x_{k-l}$ is one of the plausible states because it must satisfy the system dynamics for some $\{e_j\}_{j=k-l}^k$. Since this attack signal is unknown, we cannot distinguish between the correct state and other plausible ones. In order to identify the plausible states, the system design must incorporate sensor redundancies, allowing it to generate state estimates without some sensors.

 To formalize this, 
we first define systems with partial observations.  Let $\Gamma \subseteq [q]$ be an index set of the sensors. The system $\Sigma_{\Gamma}$ with measurements from $\Gamma$ is given by
\begin{equation} \label{eq:disturbance-free gamma-system}
\Sigma_{\Gamma}: \left\{
\begin{aligned}
    x_{k+1} & = F(x_k, u_k),  \\
     y_{k+1}^{\Gamma}  & = c_{\Gamma}(x_{k+1}) + e_{k+1}^\Gamma,
\end{aligned}    \right.
\end{equation}
where $c_{\Gamma}(x_{k+1})$ and $e_{k+1}^\Gamma$ are obtained  by removing all the entries with indices not in $\Gamma$ from $c(x_{k+1})$ and $ e_{k+1}$, respectively. Let $Y_{k-l:k}^\Gamma =  \left(y_{k-l}^\Gamma , y_{k-l+1}^\Gamma ,  \ldots, y_{k}^\Gamma \right) $.

Suppose that  the system \eqref{eq:disturbance-free gamma-system} is  differentially observable. An observability map  $\mathcal{L}_\Gamma$  thus exists by definition. However, even if possible ``inconsistencies" exist in the input-output data $(U_{k-l:k-1}, Y_{k-l:k}^{\Gamma})$ due to sensor attacks,  the map $\mathcal{L}_\Gamma$ always provides a state estimate. To better characterize how reasonable the state estimate is, we introduce the following consistency condition.

\begin{definition}[Consistency condition]
    Suppose that the system \eqref{eq:disturbance-free gamma-system} is  differentially observable. The input-output data $(U_{k-l:k-1}, Y_{k-l:k}^\Gamma)$ of the system \eqref{eq:disturbance-free gamma-system} is \emph{consistent} if  variables $(z_{k-l}, \ldots,z_{k} )$ exist such that the following equations hold
    \begin{equation} \label{eq:consistency no disturbance}
    \begin{aligned}
        & z_{k-l} =  \mathcal{L}_{\Gamma}\left( U_{k-l:k-1},Y_{k-l:k}^\Gamma \right), \\
        & z_{j+1} = F(z_j, u_j) ,   \hspace{0.6cm} \text{for } j= k-l, \ldots, k-1,\\
        & y_{j}^{\Gamma} = c_{\Gamma}(z_{j}) ,  \hspace{1.4cm} \text{for } j= k-l, \ldots, k.
    \end{aligned}
\end{equation}    
\end{definition}
In plain words, the data is consistent if the observability map produces a past state that, when propagated through the system model~\eqref{eq:disturbance-free gamma-system} to the current time, agrees with the data. The consistency condition is useful for identifying state estimates that are not compatible with the dynamics and measurement model. Nevertheless, the omniscient attacker can manipulate the measurements to yield a consistent yet incorrect state estimate. To this end, our strategy will involve checking state estimates from different sets of sensors. Therefore, we introduce the notion of sparse observability that ensures we can produce state estimates for different subset of sensors.
\begin{definition}[$r$-Sparse observability] \label{def:sparse_observability no disturbance}
   The system \eqref{eq:sampled-data_system no disturbance} is  \emph{$r$-sparse (differentially) observable} if,  for any $\Gamma \in \mathbb{C}_{p}^{p-r}$, the system $\Sigma_{\Gamma}$ is differentially observable of some order $l$. 
\end{definition}

The sparse observability property for a given system can be checked offline. This property is closely related to how many sensor attacks the system can endure. We first present a lemma that will be useful in later proofs.

\begin{lemma}\label{lem:consistency over two Gamma}
     Let $\Gamma_1,\Gamma_2$ be two index sets and $\Gamma_1 \subseteq \Gamma_2 \subseteq [p]$. We have the following results:
     \begin{enumerate}
         \item If the system $\Sigma_{\Gamma_1}$ is differentially observable, then the system $\Sigma_{\Gamma_2}$ is also differentially observable;
         \item Suppose that the systems $\Sigma_{\Gamma_1}$ and $\Sigma_{\Gamma_2}$ are both differentially observable with observability maps $ \mathcal{L}_{\Gamma_1}, \mathcal{L}_{\Gamma_2}$, respectively.  If the input-output data $(U_{k-l:k-1}, Y_{k-l:k}^{\Gamma_2} )$ is consistent for $\Sigma_{\Gamma_1}$, then the input-output data $(U_{k-l:k-1}, Y_{k-l:k}^{\Gamma_1} )$ is also consistent for $\Sigma_{\Gamma_1}$, and the following relation holds:
     \begin{equation}
        \hspace{-5mm} \mathcal{L}_{\Gamma_1}(U_{k-l:k-1}, Y_{k-l:k}^{\Gamma_1} ) = \mathcal{L}_{\Gamma_2}(U_{k-l:k-1}, Y_{k-l:k}^{\Gamma_2} ).
     \end{equation}
     \end{enumerate}
\end{lemma}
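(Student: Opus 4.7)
The plan is to handle the two claims in sequence, exploiting the basic fact that restricting data indexed by $\Gamma_2$ to the subset $\Gamma_1\subseteq\Gamma_2$ is a well-defined projection, so data generated by the larger system $\Sigma_{\Gamma_2}$ automatically restricts to data generated by the smaller system $\Sigma_{\Gamma_1}$.

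For claim (1), I would construct the observability map for $\Sigma_{\Gamma_2}$ directly out of the one for $\Sigma_{\Gamma_1}$. Given input-output data $(U_{k-l:k-1}, Y_{k-l:k}^{\Gamma_2})$ generated by $\Sigma_{\Gamma_2}$ with no attacks, the sub-sequence $Y_{k-l:k}^{\Gamma_1}$ obtained by dropping coordinates in $\Gamma_2\setminus\Gamma_1$ is exactly the attack-free output sequence of $\Sigma_{\Gamma_1}$ under the same input and same underlying state trajectory. Hence the map
\begin{equation*}
\mathcal{L}_{\Gamma_2}(U_{k-l:k-1}, Y_{k-l:k}^{\Gamma_2}) := \mathcal{L}_{\Gamma_1}(U_{k-l:k-1}, Y_{k-l:k}^{\Gamma_1})
\end{equation*}
recovers $x_{k-l}$ by differential observability of $\Sigma_{\Gamma_1}$, proving that $\Sigma_{\Gamma_2}$ is differentially observable of the same order $l$. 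An important subtlety here is that Definition~\ref{def:diff_observability} only requires the existence of some observability map, so this construction suffices even though the ``native'' $\mathcal{L}_{\Gamma_2}$ used later could be a different map; uniqueness of the recovered $x_{k-l}$ in the attack-free case makes this ambiguity harmless, and this is the point I would flag as needing care.

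For claim (2), by consistency of $(U_{k-l:k-1}, Y_{k-l:k}^{\Gamma_2})$ for $\Sigma_{\Gamma_2}$, there exist $(z_{k-l},\ldots,z_k)$ satisfying \eqref{eq:consistency no disturbance} with $\Gamma=\Gamma_2$. Restricting the output equation $y_j^{\Gamma_2} = c_{\Gamma_2}(z_j)$ to the index subset $\Gamma_1$ yields $y_j^{\Gamma_1} = c_{\Gamma_1}(z_j)$ for $j=k-l,\ldots,k$, so the sequence $Y_{k-l:k}^{\Gamma_1}$ is precisely the attack-free output of $\Sigma_{\Gamma_1}$ from initial state $z_{k-l}$ under input $U_{k-l:k-1}$. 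Applying the differential observability of $\Sigma_{\Gamma_1}$ to this attack-free data gives
\begin{equation*}
\mathcal{L}_{\Gamma_1}(U_{k-l:k-1}, Y_{k-l:k}^{\Gamma_1}) = z_{k-l} = \mathcal{L}_{\Gamma_2}(U_{k-l:k-1}, Y_{k-l:k}^{\Gamma_2}),
\end{equation*}
which both settles the claimed identity and, combined with the same dynamics and output equations that $(z_{k-l},\ldots,z_k)$ already satisfy, establishes consistency of $(U_{k-l:k-1}, Y_{k-l:k}^{\Gamma_1})$ for $\Sigma_{\Gamma_1}$.

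The main obstacle I foresee is purely notational rather than conceptual: making sure that ``restriction'' commutes cleanly, i.e.\ $(c(z))^{\Gamma_1} = c_{\Gamma_1}(z)$ and $(Y^{\Gamma_2})^{\Gamma_1} = Y^{\Gamma_1}$ when $\Gamma_1\subseteq\Gamma_2$. Once this compatibility is written out once, both parts reduce to invoking Definition~\ref{def:diff_observability} on attack-free data generated by a trajectory that has already been produced for us by the hypothesis of $\Sigma_{\Gamma_2}$-consistency.
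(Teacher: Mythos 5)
Your proposal is correct and follows essentially the same route as the paper: Statement 1 is proved by exactly the same construction (define $\mathcal{L}_{\Gamma_2}$ as $\mathcal{L}_{\Gamma_1}$ applied to the restricted data), and Statement 2 rests on the same key observation that the trajectory $(z_{k-l},\ldots,z_k)$ furnished by $\Gamma_2$-consistency generates attack-free data for $\Sigma_{\Gamma_1}$, so differential observability of $\Sigma_{\Gamma_1}$ pins down $\mathcal{L}_{\Gamma_1}$'s output as $z_{k-l}$. The only cosmetic difference is that you argue directly while the paper phrases the last step as a contradiction (two distinct consistent initial states would violate observability); both are valid and equivalent in substance.
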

\begin{proof}
    We first prove Statement 1). One can verify by definition that the map $$\mathcal{L}_{\Gamma_2}(U_{k-l:k-1}, Y_{k-l:k}^{\Gamma_2}) = \mathcal{L}_{\Gamma_1}(U_{k-l:k-1}, Y_{k-l:k}^{\Gamma_1} )  $$ is a suitable observability map for the system $\Sigma_{\Gamma_2}$.

     Next, we prove Statement 2) via contradiction. Let  $z_{k-l}^1 =  \mathcal{L}_{\Gamma_1}(U_{k-l:k-1}, Y_{k-l:k}^{\Gamma_1} ), z_{k-l}^2 =  \mathcal{L}_{\Gamma_2}(U_{k-l:k-1}, Y_{k-l:k}^{\Gamma_2} )$. Suppose that $z_{k-l}^1 \neq z_{k-l}^2 $. Since  the input-output data $(U_{k-l:k-1}, Y_{k-l:k}^{\Gamma_2} )$ is consistent, $z_{k-l}^2 $ and its corresponding state sequence $(z_{k-l}^2, \ldots,z_{k}^2 )$ solves \eqref{eq:consistency no disturbance}. As the measurement data $Y_{k-l:k}^{\Gamma_1}$ is part of $Y_{k-l:k}^{\Gamma_2}$, the state sequence $(z_{k-l}^2, \ldots,z_{k}^2 )$ also solves the consistency condition \eqref{eq:consistency no disturbance} for  $(U_{k-l:k-1}, Y_{k-l:k}^{\Gamma_1} )$. Thus the input-output data $(U_{k-l:k-1}, Y_{k-l:k}^{\Gamma_1} )$ is also consistent. Consider a virtual system with dynamics 
     $$
     z_{k+1} = F(z_k,u_k), ~y^{\Gamma_1}_k = c_{\Gamma_1}(z_k).
     $$ Then two distinct states $z_{k-l}^1,z_{k-l}^2 $ are both possible initial states for input-output data $(U_{k-l:k-1}, Y_{k-l:k}^{\Gamma_1} )$, which contradicts with the differential observability assumption in the premise. Thus, we conclude that $z_{k-l}^1 = z_{k-l}^2 $.
\end{proof}

Now we are ready to present our first main result that establishes the equivalence between the collection of initial state estimates that are consistent with data and the set of plausible initial states.
\begin{proposition} \label{prop:s_sparse no disturbance}
    If system \eqref{eq:sampled-data_system no disturbance} is $s$-sparse observable, then the set of plausible states satisfies
    \begin{equation} \label{eq:X_t_t-l no disturbance}
    \begin{aligned}
     \hspace{-2mm}   \myset{X}_{k}^{k-l} = \bigcup_{\Gamma \in \mathbb{C}_{p}^{p-s}}& \left\{ \mathcal{L}_{\Gamma} ( U_{k-l:k-1}, Y_{k-l:k}^\Gamma ) \right.  \\ 
        &  \left. \text{ if } (U_{k-l:k-1}, Y_{k-l:k}^\Gamma) \text{ is consistent}\right\}.
    \end{aligned}
    \end{equation}
\end{proposition}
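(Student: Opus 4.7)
The plan is to prove the set equality in \eqref{eq:X_t_t-l no disturbance} by establishing the two inclusions separately. The $s$-sparse observability hypothesis will be used in both directions: it guarantees existence of the maps $\mathcal{L}_\Gamma$ for every $\Gamma\in\mathbb{C}_p^{p-s}$, and it lets us uniquely pin down the state from any uncorrupted subset of $p-s$ sensors.

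For the inclusion $(\supseteq)$, I fix any $\Gamma\in\mathbb{C}_{p}^{p-s}$ such that $(U_{k-l:k-1}, Y_{k-l:k}^\Gamma)$ is consistent for $\Sigma_\Gamma$, and set $z_{k-l}=\mathcal{L}_\Gamma(U_{k-l:k-1}, Y_{k-l:k}^\Gamma)$. By the consistency equations \eqref{eq:consistency no disturbance}, there exists a corresponding trajectory $(z_{k-l},\ldots,z_k)$ with $z_{j+1}=F(z_j,u_j)$ and $y_j^\Gamma=c_\Gamma(z_j)$. To verify that $z_{k-l}\in\myset{X}_k^{k-l}$ per Definition \ref{def:plausible state no disturbance}, I construct the attack sequence $\{e_j\}_{j=k-l}^{k}$ componentwise by $e_j^i=0$ for $i\in\Gamma$ and $e_j^i=y_j^i-c^i(z_j)$ for $i\in[p]\setminus\Gamma$. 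Since $|[p]\setminus\Gamma|=s$, this $e$ respects Assumption \ref{ass:s attacks}, and by construction it satisfies \eqref{eq:plausible condition no disturbance}.

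For the inclusion $(\subseteq)$, I take an arbitrary $z_{k-l}\in\myset{X}_k^{k-l}$ together with a witness attack signal $\{e_j\}$ and state sequence $(z_{k-l},\ldots,z_k)$. Define the unattacked sensor set $\Gamma^{\star}:=\{i\in[p]:e_j^i=0 \text{ for all } j\}$; by Assumption \ref{ass:s attacks} we have $|\Gamma^{\star}|\geq p-s$, so I may choose $\Gamma\subseteq\Gamma^{\star}$ with $|\Gamma|=p-s$. By $s$-sparse observability, $\Sigma_\Gamma$ is differentially observable with map $\mathcal{L}_\Gamma$. On $\Gamma$ the measurements are uncorrupted, giving $y_j^\Gamma=c_\Gamma(z_j)$; invoking Definition \ref{def:diff_observability} on this true input-output data yields $z_{k-l}=\mathcal{L}_\Gamma(U_{k-l:k-1}, Y_{k-l:k}^\Gamma)$. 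The same trajectory $(z_{k-l},\ldots,z_k)$ then verifies all three lines of \eqref{eq:consistency no disturbance}, so the data is consistent for $\Sigma_\Gamma$ and $z_{k-l}$ lies in the right-hand set of \eqref{eq:X_t_t-l no disturbance}.

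Neither direction looks technically deep; the conceptual hinge that needs care is the $(\subseteq)$ step, where one must recognize that the \emph{fixedness} of the attacked sensor indices in Assumption \ref{ass:s attacks} is what lets me pick a single $\Gamma$ that is uncorrupted throughout the whole window $[k-l,k]$, rather than having to handle a time-varying unattacked set. Once this observation is made explicit, the rest is bookkeeping: applying Definition \ref{def:diff_observability} on the uncorrupted subsystem, and then checking the three equations of the consistency condition against the witness trajectory. Lemma \ref{lem:consistency over two Gamma} is not strictly needed here, but would be useful if one wanted a cleaner statement that any $\Gamma\subseteq\Gamma^{\star}$ of size $p-s$ works and produces the same estimate.
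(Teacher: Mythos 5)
Your proof is correct and takes essentially the same route as the paper: both directions are handled by the same two-inclusion argument, with the uncorrupted sensor subset $\Gamma\subseteq[p]\setminus\Gamma_A$ of size $p-s$ giving $(\subseteq)$ and the explicitly constructed attack signal $e_j=y_j-c(z_j)$ giving $(\supseteq)$. Your version is slightly more explicit than the paper's (e.g., invoking Definition~\ref{def:diff_observability} to get $z_{k-l}=\mathcal{L}_\Gamma(U_{k-l:k-1},Y_{k-l:k}^\Gamma)$, and noting that the fixedness of the attacked indices is what permits a single $\Gamma$ over the whole window), but the substance is identical.
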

\begin{proof}
    We prove this result in two steps. For notational simplicity, let $\myset{S}_{RHS}$ be the set on the right-hand side of \eqref{eq:X_t_t-l no disturbance}. We will show that $\myset{X}_k^{k-l} \subseteq \myset{S}_{RHS}$ and $\myset{S}_{RHS} \subseteq \myset{X}_k^{k-l}$.
    
   $\myset{X}_k^{k-l} \subseteq \myset{S}_{RHS}$:  Consider any $z_{k-l} \in \mathcal{X}_k^{k-l}$. Based on Assumption \ref{ass:s attacks}, there exists a corresponding sequence of attacking signals $\{e_j\}_{j=k-l}^k$. Denote the attacked sensors as $\Gamma_A$. From Definition  \ref{def:plausible state no disturbance}, we know that $|\Gamma_A| \leq s$. Let $\Gamma $ be a subset of $[p]\setminus \Gamma_A$ with cardinality $p-s$. Recall that the system is assumed to be $s$-sparse observable, and thus, an observability map $ \mathcal{L}_{\Gamma}$ exists. Because the input-output data $ (U_{k-l:k-1}, Y_{k-l:k}^\Gamma)$ is not corrupted by the attack signals $e_k$, it  must be consistent. Thus we conclude $z_{k-l} =\mathcal{L}_{\Gamma}(U_{k-l:k-1}, Y_{k-l:k}^\Gamma)\in \myset{S}_{RHS}$. 

    $ \myset{S}_{RHS} \subseteq  \myset{X}_k^{k-l} $: Consider any $z_{k-l} \in \myset{S}_{RHS}$ with a corresponding $\Gamma \in \mathbb{C}_{p}^{p-s}$ and input-output data $(U_{k-l:k-1}, Y_{k-l:k}^\Gamma)$. Let $(z_{k-l}, \ldots,z_{k} )$ be the state sequence starting from $z_{k-l}$ with input sequence $U_{k-l:k-1}$.  Letting $e_j = y_j - c(z_{j})$ for $j = k-l, \ldots, k$, we know by consistency condition that $e_j^i = 0$ for $i\in \Gamma$. Thus, the sequence of attack signals $\{e_j\}_{j=k-l}^k$  satisfies Assumption \ref{ass:s attacks} and the condition \eqref{eq:plausible condition no disturbance}. By definition, we conclude that $z_{k-l} \in \myset{X}_k^{k-l} $.
\end{proof}

In general, $s$-sparse observability assumption is necessary for attacks to be detectable, i.e., that we can 
distinguish scenarios where all of sensors are attack-free or some are under attacks. See \cite[Theorem 16.1]{diggavi2020coding}  for further details. From Proposition \ref{prop:s_sparse no disturbance}, we know that the set $\myset{X}_{k}^{k-l}$ is finite under Assumption~\ref{ass:s attacks}. More importantly, Proposition \ref{prop:s_sparse no disturbance} offers an approach to compute the set of plausible states.

In the following we show that with higher sensor redundancy level, we are able to determine the correct state.

\begin{corollary} \label{cor:2s_sparse no disturbance}
    If system \eqref{eq:sampled-data_system no disturbance} is $2s$-sparse observable, then the state $x_{k-l} $ satisfies 
    $$x_{k-l} = \mathcal{L}_{\Gamma}\left( U_{k-l:k-1}, Y_{k-l:k}^\Gamma \right) $$ whenever $( U_{k-l:k-1}, Y_{k-l:k}^\Gamma)$ is consistent and $\Gamma \in \mathbb{C}_{p}^{p-s}$.
\end{corollary}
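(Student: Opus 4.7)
The plan is to exploit the extra sensor redundancy ($2s$ instead of $s$) to carve out, inside the given index set $\Gamma$, an attack-free subset that is still large enough to render the system differentially observable. Once this is done, the conclusion follows by applying Lemma~\ref{lem:consistency over two Gamma}.

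\textbf{Step 1: Carve out an attack-free subset of $\Gamma$.} Let $\Gamma_A \subseteq [p]$ be the index set of sensors actually under attack. By Assumption~\ref{ass:s attacks}, $|\Gamma_A| \leq s$. Define
\begin{equation*}
    \Gamma' \;:=\; \Gamma \setminus \Gamma_A.
\end{equation*}
Since $|\Gamma| = p-s$ and $|\Gamma_A|\le s$, we have $|\Gamma'| \geq (p-s) - s = p-2s$, and by construction $\Gamma' \cap \Gamma_A = \emptyset$, so $Y_{k-l:k}^{\Gamma'}$ consists entirely of uncorrupted measurements.

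\textbf{Step 2: Apply $2s$-sparse observability.} Pick any $\tilde{\Gamma} \in \mathbb{C}_p^{p-2s}$ with $\tilde{\Gamma} \subseteq \Gamma'$; by $2s$-sparse observability, $\Sigma_{\tilde{\Gamma}}$ is differentially observable. Since $\tilde{\Gamma}\subseteq \Gamma' \subseteq \Gamma$, Lemma~\ref{lem:consistency over two Gamma} statement~1 promotes this to differential observability of both $\Sigma_{\Gamma'}$ and $\Sigma_\Gamma$, supplying observability maps $\mathcal{L}_{\Gamma'}$ and $\mathcal{L}_\Gamma$.

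\textbf{Step 3: Conclude via Lemma~\ref{lem:consistency over two Gamma}, statement~2.} Because $Y_{k-l:k}^{\Gamma'}$ is attack-free, the true state sequence $(x_{k-l},\ldots,x_k)$ verifies the consistency relations \eqref{eq:consistency no disturbance} for $\Sigma_{\Gamma'}$, and differential observability of $\Sigma_{\Gamma'}$ forces
\begin{equation*}
    \mathcal{L}_{\Gamma'}\!\bigl(U_{k-l:k-1}, Y_{k-l:k}^{\Gamma'}\bigr) \;=\; x_{k-l}.
\end{equation*}
On the other hand, by hypothesis $(U_{k-l:k-1}, Y_{k-l:k}^{\Gamma})$ is consistent for $\Sigma_\Gamma$, and $\Gamma' \subseteq \Gamma$; applying Lemma~\ref{lem:consistency over two Gamma} statement~2 with $\Gamma_1 = \Gamma'$ and $\Gamma_2 = \Gamma$ yields
\begin{equation*}
    \mathcal{L}_{\Gamma}\!\bigl(U_{k-l:k-1}, Y_{k-l:k}^{\Gamma}\bigr) \;=\; \mathcal{L}_{\Gamma'}\!\bigl(U_{k-l:k-1}, Y_{k-l:k}^{\Gamma'}\bigr) \;=\; x_{k-l},
\end{equation*}
which is the desired identity.

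No step is genuinely hard here: the conceptual crux is just the pigeonhole-style observation in Step~1 that stripping at most $s$ attacked sensors from the $(p-s)$-element set $\Gamma$ still leaves at least $p-2s$ clean sensors, which is exactly the redundancy level that $2s$-sparse observability guarantees. If anything, the only subtlety worth double-checking is that statement~2 of Lemma~\ref{lem:consistency over two Gamma} is invoked in the correct direction (consistency on the larger set $\Gamma$ transfers to the smaller set $\Gamma'$), which matches how the lemma is stated.
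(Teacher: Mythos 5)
Your proof is correct and follows essentially the same route as the paper's: both use the pigeonhole observation that $\Gamma$ must contain an attack-free subset of cardinality $p-2s$, invoke the differential observability definition on that clean subset to recover $x_{k-l}$ exactly, and then transfer the estimate back to $\mathcal{L}_\Gamma$ via statement~2 of Lemma~\ref{lem:consistency over two Gamma}. The only (immaterial) difference is that you keep the full set $\Gamma\setminus\Gamma_A$ and promote observability from a $(p-2s)$-element subset via statement~1, whereas the paper directly selects a subset of cardinality exactly $p-2s$.
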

\begin{proof}
    From Statement 1 of Lemma~\ref{lem:consistency over two Gamma}, we deduce that $2s$-sparse observable systems are also $s$-sparse observable. Therefore, from
    Proposition \ref{prop:s_sparse no disturbance}, the state $x_{k-l}\in \mathcal X_k^{k-l}$ as defined in \eqref{eq:X_t_t-l no disturbance}.  
     We will show that, for any $\Gamma\in \mathbb{C}_{p}^{p-s}$ such that $(U_{k-l:k-1}, Y_{k-l:k}^\Gamma)$  is consistent,
     the observability map will produce the correct state as $\mathcal{L}_{\Gamma}\left( U_{k-l:k-1}, Y_{k-l:k}^\Gamma \right) = x_{k-l}$. Let $z_{k-l} = \mathcal{L}_{\Gamma}\left( U_{k-l:k-1}, Y_{k-l:k}^\Gamma \right)$ be the state produced by the observability map.
     Because the sensor index set $\Gamma$ has cardinality $p-s$, and there can be at most $s$ attacked sensors, there exists an attack-free subset of sensors $\Gamma^\prime \subset \Gamma$ with $|\Gamma^\prime| = p-2s$. From the $2s$-sparse observability assumption, there also exists a corresponding observability map $\mathcal{L}_{\Gamma^\prime}$.
     The observability definition requires that $\mathcal{L}_{\Gamma^\prime}( U_{k-l:k-1}, Y_{k-l:k}^{\Gamma^\prime} ) = x_{k-l}$ because there is no attack on sensors in ${\Gamma^\prime}$, i.e., $e_k^{\Gamma^\prime}\equiv 0$. At the same time, $( U_{k-l:k-1}, Y_{k-l:k}^{\Gamma^\prime} )$ is consistent and we conclude $\mathcal{L}_{\Gamma}( U_{k-l:k-1}, Y_{k-l:k}^{\Gamma} ) = \mathcal{L}_{\Gamma^\prime}( U_{k-l:k-1}, Y_{k-l:k}^{\Gamma^\prime} ) = x_{k-l}$ from Lemma~\ref{lem:consistency over two Gamma},  concluding that $x_{k-l}=z_{k-l}$ as desired.
\end{proof}

Once the set of initial plausible states $\myset{X}_{k}^{k-l}$ is obtained, we can propagate it forward along system dynamics to compute the current plausible states
\begin{equation}\label{eq:propagation no disturbance}
\begin{aligned}
    \myset{X}_{k}^{k} & = \{ z_k\in \mathbb{R}^n: \exists z_{k-l}\in \myset{X}_{k}^{k-l}  \\ 
    & \textup{ s.t. } z_{j+1}= F(z_{j},u_j) \textup{ for } j = k-l, \ldots, k-1  \}.
\end{aligned}
\end{equation}
A secure safety filter for time  $t\in [t_k, t_{k+1})$ is given by
\begin{equation} \label{eq:ssf no disturb}
    \begin{aligned}
        &  u_{\textup{safe}}(t) =    \argmin_{u} \ \|  u - u_{\text{nom}}(t_k)\|^2 \\
        \mathrm{s.t. } \ &   h(F(x,u)) - h(x) \geq - \gamma(h(x)) + \epsilon  \textup{ for } x\in \myset{X}_{k}^{k}.
    \end{aligned}
\end{equation}

\begin{thm}\label{thm:safe_control no disturbance}
   Suppose that Assumption~\ref{ass:s attacks} holds and $h$ is a zero-order CBF for system~\eqref{eq:continuous-time systems}.
   When system \eqref{eq:plausible condition no disturbance} is $s$-sparse observable,  system~\eqref{eq:continuous-time systems} is safe on set $\myset{C}$ with the secure safety filter~\eqref{eq:ssf no disturb} if the filter is always feasible. When system \eqref{eq:plausible condition no disturbance} is $2s$-sparse observable, system~\eqref{eq:continuous-time systems} is safe on set $\myset{C}$ with the secure safety filter~\eqref{eq:ssf no disturb}.
\end{thm}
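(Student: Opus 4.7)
The plan is to argue, by induction on $k$, that enforcing the zero-order CBF inequality over the entire plausible set $\myset{X}_{k}^{k}$ implicitly enforces it at the true state $x_k$, and then to invoke Definition~\ref{def:zocbf} to close the induction step across both each inter-sample interval and the discrete transition.

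\textbf{Step 1: the true state lies in the plausible set.} First I would observe that the true attack sequence $\{e_j\}_{j=k-l}^{k}$ trivially satisfies the plausibility conditions in \eqref{eq:plausible condition no disturbance} with $z_j = x_j$, so $x_{k-l}\in \myset{X}_{k}^{k-l}$. Combined with the forward propagation rule~\eqref{eq:propagation no disturbance} driven by the actually-applied inputs $U_{k-l:k-1}$, this gives $x_k \in \myset{X}_{k}^{k}$. In the $2s$-sparse case, Corollary~\ref{cor:2s_sparse no disturbance} strengthens this to the equalities $\myset{X}_{k}^{k-l} = \{x_{k-l}\}$ and hence $\myset{X}_{k}^{k} = \{x_k\}$, since any $\Gamma\in \mathbb{C}_p^{p-s}$ that yields a consistent dataset must return $x_{k-l}$ under its observability map.

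\textbf{Step 2: one induction step from the zero-order CBF.} Assume $x_k\in \myset{C}$ (so $h(x_k)\geq 0$) and that \eqref{eq:ssf no disturb} is feasible with minimizer $u_{\textup{safe}}$. Because $x_k\in \myset{X}_{k}^{k}$ by Step~1, the enforced constraint evaluated at $x_k$ yields $h(F(x_k,u_{\textup{safe}})) - h(x_k) \geq -\gamma(h(x_k)) + \epsilon$. Using $|\gamma(s)|\leq |s|$ with $h(x_k)\geq 0$, this gives $h(x_{k+1})\geq \epsilon$. The property of $\epsilon$ embedded in Definition~\ref{def:zocbf} then guarantees $h(\phi(t;x_k,u_{\textup{safe}}))\geq 0$ for all $t\in[t_k,t_{k+1})$, and in particular $x_{k+1}\in \myset{C}$. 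Starting from the hypothesis $x_{l+1}\in \myset{C}$ and iterating, we obtain $x(t)\in \myset{C}$ for all $t\geq (l+1)T$, establishing the $s$-sparse claim \emph{under the feasibility hypothesis}.

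\textbf{Step 3: removing the feasibility hypothesis in the $2s$-sparse case.} Here the constraint in~\eqref{eq:ssf no disturb} collapses to a single zero-order CBF inequality at $x_k$, since $\myset{X}_{k}^{k} = \{x_k\}$ by Step~1. Definition~\ref{def:zocbf} guarantees that such an inequality admits a feasible $u$ at every $x\in\mathbb{R}^n$, so the QP is automatically feasible and Step~2 applies unconditionally.

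\textbf{Main obstacle.} There is no deep technical hurdle; the result is essentially a careful bookkeeping exercise stitching together Proposition~\ref{prop:s_sparse no disturbance}, Corollary~\ref{cor:2s_sparse no disturbance}, and Definition~\ref{def:zocbf}. The most delicate handoff is the one between the discrete decrement inequality $h(x_{k+1})\geq \epsilon$ and the continuous inter-sample statement $h(\phi(t;x_k,u_{\textup{safe}}))\geq 0$, which is exactly what the $\epsilon$ in Definition~\ref{def:zocbf} is engineered to enable; I would devote the most care there and in cleanly aligning the induction base at $k = l+1$ with the definition of safety used in the paper.
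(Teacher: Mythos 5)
Your proposal is correct and follows essentially the same route as the paper's proof: place the true state $x_k$ in $\myset{X}_k^k$ (a singleton in the $2s$-sparse case by Corollary~\ref{cor:2s_sparse no disturbance}), use the enforced constraint at $x_k$ together with $|\gamma(s)|\leq|s|$ to get $h(x_{k+1})\geq\epsilon$, invoke the $\epsilon$-property of Definition~\ref{def:zocbf} for inter-sample safety, and recurse. Your treatment is in fact slightly more explicit than the paper's about why feasibility is automatic in the $2s$-sparse case and about the induction hypothesis $h(x_k)\geq 0$ being needed for the bound $-\gamma(h(x_k))\geq -h(x_k)$.
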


\begin{proof}
    When the system is $s$-sparse observable, since the state $x_{k-l}\in \myset{X}_{k}^{k-l} $, we know that $x(t_k)\in \myset{X}_{k}^{k} $ based on~\eqref{eq:propagation no disturbance}. Under the feasibility assumption, we obtain $h(F(x(t_k),u))\geq (1- \gamma)( h(x(t_k))) + \epsilon  \geq \epsilon$. From the definition of an zero-order CBF, we thus deduce $h(x(t)) \geq 0$ for $t\in [t_k, t_{k+1})$. Applying this analysis recursively, we obtain safety guarantee for the closed-loop continuous-time system. When the system is $2s$-sparse observable, the feasibility property is guaranteed thanks to that $h$ is a zero-order CBF and that $ \myset{X}_{k}^{k}$ is a singleton set containing the state $x(t_k)$  from Corollary~\ref{cor:2s_sparse no disturbance}.
\end{proof}

When the system is $s$-sparse observable, it is difficult, in general, to verify the feasibility of the secure safety filter \eqref{eq:ssf no disturb} \textit{a priori}. Intrinsically, this is due to 
$s$-sparse observability being fairly weak when arbitrary $s$ sensors can be under attack. If the system has higher level of sensor redundancy or smaller number of sensor attacks,  we can establish stronger feasibility claims, see also the detailed discussions on the necessary and sufficient conditions for guaranteeing feasibility for linear systems in \cite{tan2024safety}.

\begin{remark}
   When the discrete-time system \eqref{eq:sampled-data_system no disturbance} is linear, then, with data length $l+1 \geq  n$, the differential observability notion reduces to classic observability notion, and the consistency condition becomes a linear equation satisfaction condition. All results in this section have a corresponding explicit form in the linear system case. Interested readers are referred to \cite{tan2024safety} for a concrete discussion. As illustrated here, the observability and the consistency notions are the key enabler for extending these results to nonlinear systems.  However, the exact differential observability definition can be challenging to satisfy for general nonlinear systems, and we relax it in the following section.
\end{remark}

\section{Relaxed Observability}
Now we consider a  sampled-data system formulation with process disturbance $w$
\begin{equation} \label{eq:sampled-data_system disturbance}
\Sigma^D: \left\{
\begin{aligned}
    x_{k+1} & = F(x_k, u_k) + w_k,  \\
     y_{k+1}  & = c(x_{k+1})  + e_{k+1},
\end{aligned}    \right.
\end{equation}
where $F(x_k, u_k)\approx \phi(T;x_k,u_k)$  approximates state transition map after one sampling period, which is assumed to be known. The process disturbance $w_k\in \mathbb{R}^n$ represents possible error due to the approximate state transition map for the sampling period $[t_k,t_{k+1})$. We assume that the disturbance is bounded:
\begin{equation} \label{eq:w_k bound}
    \| w_k \|\leq \bar{w}, \text{ for all } k.
\end{equation}

 \begin{remark} \label{rem:delta-bounded observability}
    In many cases, the function $F(\cdot,\cdot)$ is obtained using numerical integration methods (forward Euler, Runge-Kutta, etc). The error bound $\bar{w}$ is related to the sampling time $T$ and the specific numerical integration techniques. For example, for a sufficiently smooth vector field $f(\cdot,\cdot)$ with the classic fourth-order Runge-Kutta integration method approximating the flow, there exists a constant $K>0$ such that  $ \bar{w}\leq  K T^5$ when $T$ is small enough \cite{stuart1998dynamical}. This implies that the upper bound $\bar{w}$ can be enforced to be  arbitrarily small by choosing a small enough sampling time $T$.

 \end{remark}

We introduce a notion of differential observability for system~\eqref{eq:sampled-data_system disturbance} with unknown bounded disturbance.
\begin{definition}[$\delta$-Bounded differential observability] \label{def:delta_diff_observability}
    A sampled-data system in \eqref{eq:sampled-data_system disturbance} is \emph{$\delta$-bounded differential observable} of order $l$ if, when no sensor attack is present, i.e., $ e_{k} = 0$ for all $k$, there exist a set-valued map $\mathcal{L}^D: \mathbb{R}^{p(l+1)}\times \mathbb{R}^{ml}\to 2^{\mathbb{R}^n}$ and a  ball 
 $\mathbb{B}_\delta (\hat{x}_{k-l})$ such that
    \begin{equation}
        x_{k-l} \in  \mathcal{L}^D\left(U_{k-l:k-1}, Y_{k-l:k}\right) \subseteq \mathbb{B}_\delta (\hat{x}_{k-l}),
    \end{equation}
    where $ x_{k-l}$ is the solution of \eqref{eq:sampled-data_system disturbance} at time step $k-l$.
\end{definition}
The notion of $\delta$-bounded differential observability requires that we can bound the system state, given the uncorrupted input-output data.
This property has already been established for many nonlinear systems using various approaches, including deterministic extended Kalman filters \cite{boutayeb1997convergence}, observer design \cite{arcak2003observer}, and the recent Savitzky-Golay filtering-based observer design \cite{bunton2024confidently,silvestre2024nonlinear}. 
 In the presence of attacks, the notion of plausible states follows.
\begin{definition}[Plausible initial states] \label{def:plausible state disturbance}
     Given input-output data $(U_{k-l:k-1}, Y_{k-l:k})$ of the system \eqref{eq:sampled-data_system disturbance}, we call  $z_{k-l}$ \emph{a plausible initial state} if a sequence $\{e_j\}_{j=k-l}^k$ satisfying Assumption \ref{ass:s attacks} and a sequence $\{w_j\}_{j=k-l}^{k-1}$ satisfying condition \eqref{eq:w_k bound} exist such that the following equations hold
     \begin{equation} \label{eq:plausible condition disturbance}
    \begin{aligned}
        & z_{j+1} = F(z_j, u_j) + w_j,  ~j= k-l, \ldots, k-1,\\
        & y_{j} = c(z_{j}) + e_j,   ~j= k-l, \ldots, k.
    \end{aligned}
\end{equation}  
  We denote the set of all plausible initial states by $\myset{X}_k^{k-l}$.
 \end{definition}

Similar to the previous section, we introduce a system $\Sigma_{\Gamma}^D$ with partial measurements from a set of sensors $\Gamma \subseteq [p]$ as:
\begin{equation} \label{eq:disturbance gamma-system}
\Sigma_{\Gamma}^D: \left\{
\begin{aligned}
    x_{k+1} & = F(x_k, u_k) + w_k,  \\
     y_{k+1}^{\Gamma}  & = c_{\Gamma}(x_{k+1}) + e_{k+1}^\Gamma.
\end{aligned}    \right.
\end{equation}
For these systems, we use the following consistency condition.

\begin{definition}[Consistency condition]
      Suppose that system \eqref{eq:disturbance gamma-system} is $\delta$-bounded differentially observable. The input-output data $( U_{k-l:k-1}, Y_{k-l:k}^\Gamma)$ of the system \eqref{eq:disturbance gamma-system} is \emph{consistent} if variables $ w_{k-l}, \ldots, w_{k-1}, z$ exist such that the following holds.
    \begin{equation} \label{eq:consistency disturbance}
        \begin{aligned}
        & \mathcal{L}_{\Gamma}^D(U_{k-l:k-1}, Y_{k-l:k}^\Gamma) \subseteq \mathbb{B}_{\delta}(\hat{x}_{k-l}^\Gamma) \\
         &   z_{k-l} = z, ~\| z - \hat{x}_{k-l}^\Gamma\| \leq \delta, ~\| w_k \| \leq \bar{w}, \\
        & z_{j+1} = F(z_j, u_j) + w_j,   \textup{for } j= k-l, \ldots, k-1,\\
        & y_{j}^{\Gamma} = c_{\Gamma}(z_{j}) ,  \textup{for } j= k-l, \ldots, k. 
        \end{aligned}
    \end{equation}
\end{definition}

This consistency condition can be numerically checked, for example, by formulating a feasibility optimization program with decision variables $ w_{k-l}, \ldots, w_{k-1}, z$. Another practical approach is to first forward propagate the system state from $\hat{x}_{k-l}^\Gamma$, then compute the error between $y^\Gamma$ and the would-be measurements from the propagated states, and finally compare it to a small empirically obtained threshold. 

With the consistency condition, we develop similar results to last section for systems with process disturbances.
\begin{definition}[$r$-Sparse $\delta$-bounded observability] \label{def:sparse_observability disturbance}
    The sampled-data system \eqref{eq:sampled-data_system disturbance} is \emph{$r$-sparse  $\delta$-bounded (differentially) observable}, if  for any $\Gamma \in \mathbb{C}_{p}^{p-r}$, the system $\Sigma_{\Gamma}^D$ is $\delta$-bounded differentially observable of order $l$. 
\end{definition}

\begin{lemma}\label{lem:consistency inexact case}
     Let $\Gamma_1,\Gamma_2$ be two index sets and $\Gamma_1 \subseteq \Gamma_2 \subseteq [p]$. We have the following results.
     \begin{enumerate}
         \item If the system $\Sigma_{\Gamma_1}^D$ is $\delta$-bounded differentially observable, then the system $\Sigma_{\Gamma_2}^D$ is also $\delta$-bounded  differentially observable.
         \item Suppose that the systems $\Sigma_{\Gamma_1}^D$ and $\Sigma_{\Gamma_2}^D$ are both $\delta$-bounded differentially observable. Let the respective observability maps be $ \mathcal{L}_{\Gamma_1}^D, \mathcal{L}_{\Gamma_2}^D$.  If the input-output data $(U_{k-l:k-1}, Y_{k-l:k}^{\Gamma_2} )$ is consistent, then the input-output data $(U_{k-l:k-1}, Y_{k-l:k}^{\Gamma_1} )$ is also consistent, and the following relation holds:
     \begin{equation}
       \hspace{-5mm} \mathcal{L}_{\Gamma_1}^D(U_{k-l:k-1}, Y_{k-l:k}^{\Gamma_1} ) \cap \mathcal{L}_{\Gamma_2}^D(U_{k-l:k-1}, Y_{k-l:k}^{\Gamma_2} ) \neq \emptyset.
     \end{equation}
     \end{enumerate}
\end{lemma}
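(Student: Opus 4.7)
The plan is to mirror the structure of the proof of Lemma~\ref{lem:consistency over two Gamma}, adapting it to the set-valued, $\delta$-bounded setting.

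For Statement 1, I would construct an observability map for $\Sigma_{\Gamma_2}^D$ by delegating to the map for $\Sigma_{\Gamma_1}^D$. Letting $Y_{k-l:k}^{\Gamma_1}$ be formed from $Y_{k-l:k}^{\Gamma_2}$ by dropping the entries indexed by $\Gamma_2 \setminus \Gamma_1$, set
$$
\mathcal{L}_{\Gamma_2}^D(U_{k-l:k-1}, Y_{k-l:k}^{\Gamma_2}) := \mathcal{L}_{\Gamma_1}^D(U_{k-l:k-1}, Y_{k-l:k}^{\Gamma_1}),
$$
and take $\hat{x}_{k-l}^{\Gamma_2} := \hat{x}_{k-l}^{\Gamma_1}$. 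In the attack-free setting, every true trajectory of $\Sigma_{\Gamma_2}^D$ is also a true trajectory of $\Sigma_{\Gamma_1}^D$, so the required inclusion $x_{k-l} \in \mathcal{L}_{\Gamma_2}^D \subseteq \mathbb{B}_\delta(\hat{x}_{k-l}^{\Gamma_2})$ follows immediately from the hypothesis.

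For Statement 2, I would start from a consistency witness for $\Gamma_2$: by~\eqref{eq:consistency disturbance}, there exist disturbances $w_{k-l}, \ldots, w_{k-1}$ and an initial state $z$ with associated trajectory $(z_{k-l}, \ldots, z_k)$ satisfying $\|w_j\| \leq \bar{w}$, $z_{j+1} = F(z_j, u_j) + w_j$, $y_j^{\Gamma_2} = c_{\Gamma_2}(z_j)$, and $\|z - \hat{x}_{k-l}^{\Gamma_2}\| \leq \delta$. I would then argue that the same pair $(w_j, z)$ is a consistency witness for $\Gamma_1$: the dynamics and disturbance bounds carry over unchanged, and since $\Gamma_1 \subseteq \Gamma_2$ the measurement equations immediately restrict to $y_j^{\Gamma_1} = c_{\Gamma_1}(z_j)$. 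The only nontrivial ingredient is the ball constraint $\|z - \hat{x}_{k-l}^{\Gamma_1}\| \leq \delta$, which I would obtain by applying the $\delta$-bounded differential observability of $\Sigma_{\Gamma_1}^D$ to the virtual attack-free trajectory $(z_j, u_j, w_j, y_j^{\Gamma_1})$: that trajectory produces the data $(U_{k-l:k-1}, Y_{k-l:k}^{\Gamma_1})$ with zero attack signal, so $z \in \mathcal{L}_{\Gamma_1}^D(U_{k-l:k-1}, Y_{k-l:k}^{\Gamma_1}) \subseteq \mathbb{B}_\delta(\hat{x}_{k-l}^{\Gamma_1})$. The symmetric application to $\Sigma_{\Gamma_2}^D$ gives $z \in \mathcal{L}_{\Gamma_2}^D(U_{k-l:k-1}, Y_{k-l:k}^{\Gamma_2})$, so $z$ lies in the claimed intersection.

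The main obstacle is the semantic reading of $\mathcal{L}^D$: Definition~\ref{def:delta_diff_observability} literally asserts only that the physical initial state lies in $\mathcal{L}^D$, whereas the argument above requires interpreting $\mathcal{L}^D(U, Y)$ as containing \emph{every} state that can serve as the initial state of \emph{some} attack-free trajectory producing the data. This is the natural reading when $\mathcal{L}^D$ is built directly from the data-consistency constraints, and is implicitly used in the consistency condition itself. Unlike the exact case in Lemma~\ref{lem:consistency over two Gamma}, no proof by contradiction is needed: the maps are set-valued and we only require their intersection to be nonempty, so exhibiting $z$ as a common element suffices.
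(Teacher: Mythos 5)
Your proof is correct and matches the argument the paper intends (the paper omits this proof, noting only that it parallels Lemma~\ref{lem:consistency over two Gamma}): the key step in both is that the consistency witness $z$ for $\Gamma_2$, viewed as the initial state of a virtual attack-free trajectory of $\Sigma_{\Gamma_1}^D$ that generates the data $(U_{k-l:k-1}, Y_{k-l:k}^{\Gamma_1})$, must lie in $\mathcal{L}_{\Gamma_1}^D(U_{k-l:k-1}, Y_{k-l:k}^{\Gamma_1})$ by Definition~\ref{def:delta_diff_observability}, which simultaneously certifies consistency for $\Gamma_1$ and nonemptiness of the intersection. The only cosmetic difference is that the paper's sketch phrases this as a contradiction while you exhibit the common element directly, and the interpretive caveat you flag about $\mathcal{L}^D$ containing every initial state of an attack-free trajectory compatible with the data is exactly the reading the paper itself relies on in the proof of Proposition~\ref{prop:s_sparse disturbance}, so it is not a gap.
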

The proof of Lemma~\ref{lem:consistency inexact case} follows similar steps to those of Lemma~\ref{lem:consistency over two Gamma} and is neglected due to space limitations. The difference is that now we show by contradiction the set $\mathcal{L}_{\Gamma_2}^D(U_{k-l:k-1}, Y_{k-l:k}^{\Gamma_2} )$ must contain at least one point that fulfills the consistency condition for the data $(U_{k-l:k-1}, Y_{k-l:k}^{\Gamma_1})$.

We are ready to present results that gives an over-approximation of the plausible states. 
\begin{proposition} \label{prop:s_sparse disturbance}
    If system \eqref{eq:sampled-data_system disturbance} is $s$-sparse $\delta$-bounded observable, then the set of plausible initial states satisfies
    \begin{equation} \label{eq:X_t_t-l disturbance}
    \begin{aligned}
        \myset{X}_{k}^{k-l} \subseteq \bigcup_{\Gamma \in \Lambda}& \mathcal{L}_{\Gamma}^D ( U_{k-l:k-1}, Y_{k-l:k}^\Gamma ),
    \end{aligned}
    \end{equation}
    where $\Lambda := \{ \Gamma\in \mathbb{C}_{p}^{p-s} \text{ and }  (U_{k-l:k-1}, Y_{k-l:k}^\Gamma) \text{ is consistent} \}$.
\end{proposition}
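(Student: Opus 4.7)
The plan is to mimic the easy (first) inclusion in the proof of Proposition~\ref{prop:s_sparse no disturbance}, proving only set containment rather than equality, since the map $\mathcal{L}_\Gamma^D$ is set-valued and in general returns extra candidates in the ball $\mathbb{B}_\delta(\hat{x}_{k-l}^\Gamma)$ that are not actually plausible. Concretely, I would fix an arbitrary $z_{k-l}\in \myset{X}_k^{k-l}$ and exhibit a particular $\Gamma\in\Lambda$ for which $z_{k-l}\in \mathcal{L}_{\Gamma}^D(U_{k-l:k-1},Y_{k-l:k}^\Gamma)$.

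First I would unpack the plausibility of $z_{k-l}$: by Definition~\ref{def:plausible state disturbance} there exist attack signals $\{e_j\}_{j=k-l}^{k}$ obeying Assumption~\ref{ass:s attacks} and disturbances $\{w_j\}_{j=k-l}^{k-1}$ with $\|w_j\|\leq \bar w$, together with a trajectory $(z_{k-l},\ldots,z_k)$ satisfying \eqref{eq:plausible condition disturbance}. Let $\Gamma_A:=\{i\in[p]:\exists j,\ e_j^i\neq 0\}$; by Assumption~\ref{ass:s attacks}, $|\Gamma_A|\leq s$, so I can choose $\Gamma\in\mathbb{C}_p^{p-s}$ with $\Gamma\subseteq [p]\setminus\Gamma_A$. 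For this choice, the data $Y_{k-l:k}^\Gamma$ is attack-free, i.e., $e_j^\Gamma=0$ for $j=k-l,\ldots,k$.

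Second, I would invoke the $s$-sparse $\delta$-bounded observability hypothesis to assert that $\Sigma_\Gamma^D$ is $\delta$-bounded differentially observable with map $\mathcal{L}_\Gamma^D$ and certifying center $\hat{x}_{k-l}^\Gamma$. Since the sub-trajectory $(z_{k-l},\ldots,z_k)$ with disturbances $\{w_j\}$ is an authentic solution of $\Sigma_\Gamma^D$ producing exactly $Y_{k-l:k}^\Gamma$, Definition~\ref{def:delta_diff_observability} gives
\[
z_{k-l}\in \mathcal{L}_\Gamma^D(U_{k-l:k-1},Y_{k-l:k}^\Gamma)\subseteq \mathbb{B}_\delta(\hat{x}_{k-l}^\Gamma).
\]

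Third, I would check that this $\Gamma$ lies in $\Lambda$, i.e., that $(U_{k-l:k-1},Y_{k-l:k}^\Gamma)$ is consistent in the sense of \eqref{eq:consistency disturbance}. Taking $z:=z_{k-l}$ and the very same $\{w_j\}$ from plausibility as the witnesses, I would check the bullets one by one: the image containment in $\mathbb{B}_\delta(\hat{x}_{k-l}^\Gamma)$ is by observability, $\|z-\hat{x}_{k-l}^\Gamma\|\leq \delta$ follows from the same containment, $\|w_j\|\leq \bar w$ and $z_{j+1}=F(z_j,u_j)+w_j$ are inherited from Definition~\ref{def:plausible state disturbance}, and $y_j^\Gamma=c_\Gamma(z_j)$ holds because $\Gamma\cap\Gamma_A=\emptyset$ forces $e_j^\Gamma=0$. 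Thus $\Gamma\in\Lambda$ and $z_{k-l}$ belongs to the right-hand side of \eqref{eq:X_t_t-l disturbance}, completing the inclusion.

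The argument has no real hard step; the only subtlety is careful bookkeeping between two parallel Definitions (plausibility and consistency) and remembering that $\delta$-bounded observability simultaneously certifies membership in $\mathcal{L}_\Gamma^D$ and the ball inclusion required by \eqref{eq:consistency disturbance}. I would also add a one-sentence remark explaining why equality is lost in the relaxed setting: the map $\mathcal{L}_\Gamma^D$ may return points in $\mathbb{B}_\delta(\hat{x}_{k-l}^\Gamma)$ that do not extend to any disturbance sequence $\{w_j\}$ with $\|w_j\|\leq \bar w$ reproducing the measurements, so the right-hand side of \eqref{eq:X_t_t-l disturbance} is only an over-approximation.
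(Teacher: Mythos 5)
Your proposal is correct and follows essentially the same route as the paper's proof: pick the attack-free index set $\Gamma\subseteq[p]\setminus\Gamma_A$ of cardinality $p-s$, view the plausible trajectory as a genuine (attack-free) solution of $\Sigma_\Gamma^D$ so that $\delta$-bounded observability places $z_{k-l}$ in $\mathcal{L}_\Gamma^D(U_{k-l:k-1},Y_{k-l:k}^\Gamma)\subseteq\mathbb{B}_\delta(\hat{x}_{k-l}^\Gamma)$, and then reuse the same $z_{k-l}$ and $\{w_j\}$ as witnesses for the consistency condition so that $\Gamma\in\Lambda$. Your explicit bullet-by-bullet verification of \eqref{eq:consistency disturbance} and the closing remark on why only inclusion (not equality) survives in the relaxed setting are welcome additions but do not change the argument.
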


\begin{proof}
    Based on Definition~\ref{def:plausible state disturbance}, any $z_{k-l} \in \myset{X}_{k}^{k-l}  $ has a corresponding $\{e_j\}_{j=k-l}^k$ satisfying Assumption~\ref{ass:s attacks} and $\{w_j\}_{j=k-l}^{k-1}$ satisfying condition \eqref{eq:w_k bound}. Let $\Gamma_A$ be the attacked sensors, then we know $|\Gamma_A|\leq s$ from Assumption~\ref{ass:s attacks}, so we can find an index set $\Gamma$ of size $|\Gamma| = p-s$ such that $\Gamma\subseteq [p]\setminus\Gamma_A$. 
    Furthermore, we construct a virtual system with partial measurement $\Sigma_{\Gamma}^{D,\prime}$ as in \eqref{eq:disturbance gamma-system}, which starts at $z_{k-l}$ at time $t_{k-l}$, and is subject to the process noise $\{w_j\}_{j=k-l}^{k-1}$ and zero attack signals $\{e_j^\Gamma = 0, \forall j \} $. As this system is $\delta$-bounded differentiable observable, we establish that $z_{k-l} \in \mathcal{L}_{\Gamma}^D ( U_{k-l:k-1}, Y_{k-l:k}^\Gamma ) \subseteq \mathbb{B}_{\delta}(\hat{x}_{k-l}^\Gamma) $ for some $\hat{x}_{k-l}^\Gamma$.  Thus, the data $ (U_{k-l:k-1}, Y_{k-l:k}^\Gamma)$  is consistent by definition since $\{w_j\}_{j=k-l}^{k-1}$ and $z_{k-l}$ satisfy equation \eqref{eq:consistency disturbance}. As a result, $z_{k-l}$ belongs to the set on the right-hand side, and we conclude that the set inclusion holds.
\end{proof}

A special result for the $2s$-sparse $\delta$-bounded observability case is established below.
\begin{corollary}
    If system \eqref{eq:sampled-data_system disturbance} is $2s$-sparse $\delta$-bounded observable, then
    the set of plausible initial states satisfies
    \begin{equation} \label{eq:2s-plausible_set disturbance}
        \myset{X}_{k}^{k-l} \subseteq \bigcup_{\Gamma \in \Lambda} \mathcal{L}_{\Gamma}^D ( U_{k-l:k-1}, Y_{k-l:k}^\Gamma ) \subseteq \mathbb{B}_{4\delta}(x_{k-l})
    \end{equation}
    where $\Lambda := \{ \Gamma\in \mathbb{C}_{p}^{p-s} \text{ and }  (U_{k-l:k-1}, Y_{k-l:k}^\Gamma) \text{ is consistent} \}$.
\end{corollary}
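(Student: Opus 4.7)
The plan is to prove the two inclusions in the corollary separately. The first inclusion, $\myset{X}_{k}^{k-l} \subseteq \bigcup_{\Gamma \in \Lambda} \mathcal{L}_{\Gamma}^D ( U_{k-l:k-1}, Y_{k-l:k}^\Gamma )$, is essentially free: Statement~1 of Lemma~\ref{lem:consistency inexact case} implies that $2s$-sparse $\delta$-bounded observability entails $s$-sparse $\delta$-bounded observability, so Proposition~\ref{prop:s_sparse disturbance} applies verbatim.

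For the second inclusion, I would fix an arbitrary $\Gamma\in\Lambda$ and an arbitrary $z_{k-l} \in \mathcal{L}_\Gamma^D ( U_{k-l:k-1}, Y_{k-l:k}^\Gamma )\subseteq \mathbb{B}_\delta(\hat{x}_{k-l}^{\Gamma})$, and show $\|z_{k-l}-x_{k-l}\|\leq 4\delta$. Letting $\Gamma_A$ be the set of attacked sensors, Assumption~\ref{ass:s attacks} yields $|\Gamma_A|\leq s$, so $|\Gamma\setminus\Gamma_A|\geq (p-s)-s=p-2s$; pick any $\Gamma' \subseteq \Gamma\setminus\Gamma_A$ with $|\Gamma'|=p-2s$. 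Because the system is $2s$-sparse $\delta$-bounded observable, the map $\mathcal{L}_{\Gamma'}^D$ exists. Two key facts then arise: (i) since the sensors in $\Gamma'$ are attack-free, the true state $x_{k-l}$ together with the actual process-noise sequence $\{w_j\}$ certifies consistency of $(U_{k-l:k-1}, Y_{k-l:k}^{\Gamma'})$, so by $\delta$-bounded differential observability, $x_{k-l}\in \mathcal{L}_{\Gamma'}^D ( U_{k-l:k-1}, Y_{k-l:k}^{\Gamma'} )\subseteq \mathbb{B}_\delta(\hat{x}_{k-l}^{\Gamma'})$; and (ii) applying Statement~2 of Lemma~\ref{lem:consistency inexact case} with $\Gamma_1=\Gamma'$ and $\Gamma_2=\Gamma$ furnishes a point $\tilde z_{k-l}\in \mathcal{L}_{\Gamma'}^D \cap \mathcal{L}_\Gamma^D$, which therefore lies in both $\mathbb{B}_\delta(\hat{x}_{k-l}^{\Gamma})$ and $\mathbb{B}_\delta(\hat{x}_{k-l}^{\Gamma'})$.

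Chaining $z_{k-l}$ to $x_{k-l}$ through the four intermediate points $\hat{x}_{k-l}^{\Gamma}$, $\tilde z_{k-l}$, $\hat{x}_{k-l}^{\Gamma'}$, the triangle inequality gives
\begin{align*}
\|z_{k-l} - x_{k-l}\| &\leq \|z_{k-l} - \hat{x}_{k-l}^\Gamma\| + \|\hat{x}_{k-l}^\Gamma - \tilde z_{k-l}\| \\
&\quad + \|\tilde z_{k-l} - \hat{x}_{k-l}^{\Gamma'}\| + \|\hat{x}_{k-l}^{\Gamma'} - x_{k-l}\| \\
&\leq \delta + \delta + \delta + \delta = 4\delta,
\end{align*}
so $z_{k-l}\in\mathbb{B}_{4\delta}(x_{k-l})$, as desired. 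The main obstacle is the middle step: unlike the exact case in Corollary~\ref{cor:2s_sparse no disturbance}, we cannot force $\mathcal{L}_{\Gamma'}^D$ and $\mathcal{L}_\Gamma^D$ to coincide, so the proof must instead exploit the set-valued version of Lemma~\ref{lem:consistency inexact case} to produce a single point $\tilde z_{k-l}$ that lies in both estimation sets and thereby bridges the balls centered at $\hat{x}_{k-l}^{\Gamma'}$ and $\hat{x}_{k-l}^{\Gamma}$; once that bridge is in place, the $4\delta$ bound is just the length of a four-hop chain of radius-$\delta$ estimates.
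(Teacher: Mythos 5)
Your proposal is correct and follows essentially the same route as the paper: the first inclusion via reduction from $2s$- to $s$-sparse observability and Proposition~\ref{prop:s_sparse disturbance}, and the second via an attack-free subset $\Gamma'\subseteq\Gamma$ of size $p-2s$, the nonempty intersection from Statement~2 of Lemma~\ref{lem:consistency inexact case}, and the fact that both estimation sets sit inside $\delta$-balls. Your four-hop triangle-inequality chain merely makes explicit the ``within $4\delta$'' step the paper states in words.
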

\begin{proof}
    The first set inclusion is true since $2s$-sparse $\delta$-bounded observability is a special case of $s$-sparse $\delta$-bounded observability. To prove the second set inclusion, consider any $z_{k-l}$ from the union. Let $\Gamma_2\in \Lambda$ be its set of sensors such that $z_{k-l}\in\mathcal{L}_{\Gamma_2}^D(U_{k-l:k-1}, Y_{k-l:k}^{\Gamma_2} )$ and the data is consistent.
    There exists a sub-index set $\Gamma_1\subseteq \Gamma_2 $ with $|\Gamma_1|= p -2s$ containing only attack-free sensors. 
    Then from $2s$-sparse $\delta$-bounded observability and Lemma~\ref{lem:consistency inexact case}, the observability map $  \mathcal{L}_{\Gamma_1}^D$ exists, and
    $\mathcal{L}_{\Gamma_1}^D(U_{k-l:k-1}, Y_{k-l:k}^{\Gamma_1} ) \cap \mathcal{L}_{\Gamma_2}^D(U_{k-l:k-1}, Y_{k-l:k}^{\Gamma_2} ) \neq \emptyset$.
    Note that $\mathcal{L}_{\Gamma_1}^D(U_{k-l:k-1}, Y_{k-l:k}^{\Gamma_1} )$ must contain the actual state $x_{k-l}$. As both sets can be over-approximated by a $\delta$-radius ball, any state estimate $z_{k-l}\in \mathcal{L}_{\Gamma_2}^D(U_{k-l:k-1}, Y_{k-l:k}^{\Gamma_2})$ must be within a $4\delta$ distance from $x_{k-l}$, concluding the proof.
\end{proof}

We further show that the current plausible states are bounded under mild conditions, as stated below.
\begin{proposition} \label{prop:bounded plausible states}
    Suppose that system \eqref{eq:sampled-data_system disturbance} is $s$-sparse $\delta$-bounded observable and the state transition map $F(x,u)$ is Lipschitz continuous in $x$ uniformly in $u$. Let $L$ be the Lipschitz constant. The set of current plausible states $\myset{X}_k^k$ is bounded within a union of ball regions:
\begin{equation} \label{eq:bounded plausible states disturbance}
    \hspace{-0.32cm} \myset{X}_{k}^{k} \subseteq \bigcup_{\Gamma\in \Lambda} \mathbb{B}_{\delta'}(\hat{x}_{k}^\Gamma),
\end{equation}
where  $\delta' := \underbrace{g\circ \ldots g\circ g}_{l \text{ times}}(\delta)$ with the function $g(s) := Ls + \bar{w}$, and $\hat{x}_{k}^\Gamma :=F \ldots F(F(\hat{x}_{k-l}^\Gamma, u_{k-l}),u_{k-l+1})\ldots u_{k-1}) $, $\Lambda := \{ \Gamma\in \mathbb{C}_{p}^{p-s} \text{ and }  (U_{k-l:k-1}, Y_{k-l:k}^\Gamma) \text{ is consistent} \}$, and $\hat{x}_{k-l}^\Gamma$ comes from the observability map $\mathcal{L}_\Gamma^D$.
\end{proposition}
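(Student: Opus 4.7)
The plan is to combine Proposition~\ref{prop:s_sparse disturbance} with Lipschitz continuity of $F$ and the disturbance bound to propagate, step by step, the known initial ball-bound into a ball-bound at the current time. First, I would observe that by Proposition~\ref{prop:s_sparse disturbance} together with the definition of $\delta$-bounded differential observability, every plausible initial state $z_{k-l}\in \myset{X}_k^{k-l}$ lies in $\mathcal{L}_\Gamma^D(U_{k-l:k-1},Y_{k-l:k}^\Gamma)\subseteq \mathbb{B}_\delta(\hat{x}_{k-l}^\Gamma)$ for some $\Gamma\in\Lambda$. Hence the initial error satisfies $\|z_{k-l}-\hat{x}_{k-l}^\Gamma\|\leq\delta$.

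Next, I would unfold the definition of a current plausible state $z_k\in\myset{X}_k^k$. By extension of Definition~\ref{def:plausible state disturbance}, any such $z_k$ is obtained from some plausible $z_{k-l}$ together with a disturbance sequence $\{w_j\}_{j=k-l}^{k-1}$ with $\|w_j\|\leq\bar{w}$, via the recursion $z_{j+1}=F(z_j,u_j)+w_j$. The nominal trajectory $\hat{x}_{j+1}^\Gamma = F(\hat{x}_j^\Gamma,u_j)$ uses the same inputs but no disturbance. Subtracting, applying the triangle inequality and the uniform Lipschitz bound on $F$ in $x$, I obtain
\begin{equation*}
\|z_{j+1}-\hat{x}_{j+1}^\Gamma\|\ \leq\ L\,\|z_j-\hat{x}_j^\Gamma\| + \|w_j\|\ \leq\ g\bigl(\|z_j-\hat{x}_j^\Gamma\|\bigr),
\end{equation*}
with $g(s)=Ls+\bar{w}$, which is monotone increasing in $s\geq 0$.

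Finally, I would iterate this inequality from $j=k-l$ to $j=k-1$. Using monotonicity of $g$ and the initial bound $\|z_{k-l}-\hat{x}_{k-l}^\Gamma\|\leq\delta$, an easy induction gives $\|z_k-\hat{x}_k^\Gamma\|\leq g^l(\delta)=\delta'$, so $z_k\in\mathbb{B}_{\delta'}(\hat{x}_k^\Gamma)$. Taking the union over $\Gamma\in\Lambda$ yields the claimed inclusion~\eqref{eq:bounded plausible states disturbance}. I do not anticipate a substantive obstacle: the argument is essentially a discrete Gr\"onwall-type propagation of the initial observability ball. The one subtlety worth flagging is bookkeeping, namely that the disturbance sequence $\{w_j\}$ used in the propagation is the same realization that witnesses plausibility of $z_{k-l}$ (rather than an arbitrary admissible sequence); this is automatic from the definition of $\myset{X}_k^k$ in terms of plausible initial states under the disturbed dynamics.
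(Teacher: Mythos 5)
Your proposal is correct and follows essentially the same route as the paper: both start from the initial bound $\|z_{k-l}-\hat{x}_{k-l}^\Gamma\|\leq\delta$ furnished by Proposition~\ref{prop:s_sparse disturbance} and the $\delta$-bounded observability, then propagate it one step via the Lipschitz bound and triangle inequality to get $L\delta+\bar w$, and iterate $l$ times to reach $\delta'=g^{l}(\delta)$. The paper writes out only the first propagation step and appeals to recursion, whereas you phrase it as an explicit induction using the monotonicity of $g$; this is a presentational difference only.
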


\begin{proof}
    We first consider the set of plausible states at the discrete-time $k-l+1$. Along system dynamics, we know
    \begin{equation}
    \hspace{-3mm} \begin{aligned}
       \myset{X}_{k}^{k-l+1} \subseteq \{ & z_{k-l+1}\in \mathbb{R}^n:  \exists z_{k-l}\in \myset{X}_{k}^{k-l}, \|w_{k-l}\| \leq \bar{w} \\
       &\textup{ s.t. } z_{k-l+1}= F(z_{k-l},u_{k-l}) + w_{k-l}  \}.
    \end{aligned}
    \end{equation}
     One derives, for any $z_{k-l+1}\in \myset{X}_{k}^{k-l+1} $,
    \begin{equation}
    \hspace{-3mm} \begin{aligned}
        \| z_{k-l+1}  &- F(\hat{x}_{k-l}^\Gamma,  u_{k-l}) \| \\
        &  = \| F(z_{k-l},u_{k-l}) + w_{k-l} -  F(\hat{x}_{k-l}^\Gamma, u_{k-l})\| \\
        & \leq \| F(z_{k-l},u_{k-l})  - F(\hat{x}_{k-l}^\Gamma, u_{k-l})\| + \| w_{k-l} \| \\
        & \leq L \| z_{k-l} - \hat{x}_{k-l}^\Gamma\| + \| w_{k-l} \| \leq  L\delta + \bar{w}
    \end{aligned}
    \end{equation}
    for a certain $\Gamma\in \mathbb{C}_{p}^{p-s} $ with consistent $(U_{k-l:k-1}, Y_{k-l:k}^\Gamma) $.
    Thus, the result in \eqref{eq:bounded plausible states disturbance} follows from recursively repeating above analysis for $l$ times.
\end{proof}

Leveraging a robust CBF formulation from \cite{jankovic2018robust}, a secure safety filter for time  $t\in [t_k, t_{k+1})$ is given by
\begin{equation} \label{eq:ssf disturb}
    \begin{aligned}
        &  u_{\textup{safe}}(t) =    \argmin_{u} \ \|  u - u_{\textup{nom}}(t_k)\|^2 \\
        \mathrm{s.t. } \ &   h(F(x,u)) - h(x) \geq - \gamma(h(x)) + \epsilon + \epsilon_1 \\
        & \hspace{3.5cm} \textup{ for } x\in \{ \hat{x}_{k}^\Gamma \}_{\Gamma\in \Lambda} .
    \end{aligned}
\end{equation}
Here  $\epsilon_1$ is chosen such that 
$\epsilon_1\geq L_1(L\delta' + \bar{w})$, $L_1$ is the Lipschitz constant of the function $h(\cdot)$,  $\Lambda $, $\hat{x}_{k}^\Gamma$, $L$, and $\delta'$ are defined in Proposition~\ref{prop:bounded plausible states}.

\begin{thm} \label{thm:safe_control disturbance}
   Suppose that Assumption~\ref{ass:s attacks} holds and $h$ is a zero-order CBF for  system~\eqref{eq:continuous-time systems}. When system \eqref{eq:plausible condition disturbance} is $s$-sparse $\delta$-bounded observable, system~\eqref{eq:continuous-time systems} is safe on set $\myset{C}$ with the secure safety filter \eqref{eq:ssf disturb} if it is always feasible. 
\end{thm}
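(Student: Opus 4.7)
The plan is to mirror the $s$-sparse case of Theorem~\ref{thm:safe_control no disturbance}, but to carry along the Lipschitz and disturbance estimates mandated by the relaxed observability setting. First, I would observe that the true state $x(t_k)$ always lies in $\myset{X}_k^k$ (the actual state is always plausible by Definition~\ref{def:plausible state disturbance}), so Proposition~\ref{prop:bounded plausible states} guarantees the existence of some $\Gamma^\star\in\Lambda$---arising from an attack-free sub-selection of $p-s$ sensors---such that $\|x(t_k)-\hat{x}_k^{\Gamma^\star}\|\leq \delta'$. By the standing feasibility assumption on \eqref{eq:ssf disturb}, the minimizer $u=u_{\textup{safe}}(t_k)$ satisfies the padded inequality at every $\hat{x}_k^\Gamma$ with $\Gamma\in\Lambda$, and in particular at $\hat{x}_k^{\Gamma^\star}$:
\[ h(F(\hat{x}_k^{\Gamma^\star},u)) - h(\hat{x}_k^{\Gamma^\star}) \geq -\gamma(h(\hat{x}_k^{\Gamma^\star})) + \epsilon + \epsilon_1. \]

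Next, I would transport this inequality from the proxy $\hat{x}_k^{\Gamma^\star}$ to the true state $x(t_k)$. Invoking Lipschitz continuity of $h$ with constant $L_1$, Lipschitz continuity of $F(\cdot,u)$ with constant $L$, and the one-step approximation bound $\|\phi(T;x(t_k),u)-F(x(t_k),u)\|\leq\bar{w}$, I can bound $|h(\phi(T;x(t_k),u))-h(F(\hat{x}_k^{\Gamma^\star},u))|\leq L_1(L\delta'+\bar w)$ and $|h(x(t_k))-h(\hat{x}_k^{\Gamma^\star})|\leq L_1\delta'$, together with a similar Lipschitz bound on the $\gamma$-variation. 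The design choice $\epsilon_1\geq L_1(L\delta'+\bar w)$ is made precisely so that these mismatches are absorbed, collapsing the padded inequality at $\hat{x}_k^{\Gamma^\star}$ into the plain zero-order CBF inequality
\[ h(\phi(T;x(t_k),u)) - h(x(t_k)) \geq -\gamma(h(x(t_k))) + \epsilon \]
at the true state.

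With this in hand, Definition~\ref{def:zocbf} yields $h(x(t))\geq 0$ on the whole interval $[t_k,t_{k+1})$ and, in particular, $h(x(t_{k+1}))\geq \epsilon \geq 0$. I would then close the argument by induction on $k\geq l+1$: the base case is the standing premise that $x((l+1)T)\in\myset{C}$, and the inductive step is precisely the per-interval implication just established, propagating $x(t)\in\myset{C}$ to every $t\geq (l+1)T$.

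The hard part will be the transport step that converts the filter's inequality at the centers $\hat{x}_k^\Gamma$ into a valid CBF inequality at the true state: three independent error sources---the ball radius $\delta'$, the Lipschitz constant $L$ of $F$, and the per-step disturbance bound $\bar w$---must be tracked simultaneously, and the variation of $\gamma$ over intervals of length $L_1\delta'$ must also be controlled, which implicitly requires $\gamma$ to be locally Lipschitz on the relevant range. Once these constants are gathered into a sufficiently conservative $\epsilon_1$, the recursion closes mechanically.
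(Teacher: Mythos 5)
Your overall architecture matches the paper's: locate the true state in a $\delta'$-ball around some consistent estimate $\hat{x}_k^{\Gamma^\star}$ via Proposition~\ref{prop:bounded plausible states}, use feasibility of \eqref{eq:ssf disturb} at that estimate, and chain Lipschitz bounds on $h$ and $F$ together with $\|w_k\|\leq\bar w$. The divergence, and the genuine gap, is in your ``transport'' step. You claim that $\epsilon_1\geq L_1(L\delta'+\bar w)$ absorbs \emph{all} mismatches and yields the plain ZOCBF inequality at the true state,
\[
h(\phi(T;x(t_k),u)) - h(x(t_k)) \geq -\gamma(h(x(t_k))) + \epsilon .
\]
But that budget only covers the left-hand-side mismatch $|h(\phi(T;x(t_k),u))-h(F(\hat{x}_k^{\Gamma^\star},u))|\leq L_1(L\delta'+\bar w)$. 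Replacing $h(\hat{x}_k^{\Gamma^\star})-\gamma(h(\hat{x}_k^{\Gamma^\star}))$ by $h(x(t_k))-\gamma(h(x(t_k)))$ on the right-hand side costs an additional $(1+L_\gamma)L_1\delta'$ and requires $\gamma$ to be Lipschitz, neither of which is provided by the filter as defined. You flag this yourself and then resolve it by making $\epsilon_1$ ``sufficiently conservative''---but the theorem is about the filter \eqref{eq:ssf disturb} with its stated $\epsilon_1$, so as written the step does not close.

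The paper avoids this entirely by never transporting the inequality to the true state. It stops at
\[
h(\phi(T;x(t_k),u)) \geq h(\hat{x}_k^{\Gamma}) - \gamma(h(\hat{x}_k^{\Gamma})) + \epsilon ,
\]
and then uses $|\gamma(s)|\leq|s|$ to get $h(\hat{x}_k^{\Gamma})-\gamma(h(\hat{x}_k^{\Gamma}))\geq 0$ whenever $h(\hat{x}_k^{\Gamma})\geq 0$, so that $h(\phi(T;x(t_k),u))\geq\epsilon$ follows directly; the recursion is carried on the sign of $h$ at the estimates rather than on the full CBF inequality at the true state. This is why $\epsilon_1\geq L_1(L\delta'+\bar w)$ suffices there and no Lipschitz assumption on $\gamma$ is needed. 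To repair your version, either adopt the paper's endgame (drop the transport and argue nonnegativity of $(1-\gamma)\circ h$ at the estimates), or explicitly enlarge $\epsilon_1$ to $L_1(L\delta'+\bar w)+(1+L_\gamma)L_1\delta'$ and add the Lipschitz hypothesis on $\gamma$---but then you are proving a statement about a different filter than \eqref{eq:ssf disturb}. Your concluding induction on $k$ is otherwise fine and is in fact stated more cleanly than the paper's.
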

\begin{proof}
    From Proposition~\ref{prop:bounded plausible states}, we know the state $x(t_k) \in \myset{X}_{k}^{k} \subseteq \bigcup_{\Gamma\in \Lambda} \mathbb{B}_{\delta'}(\hat{x}_{k}^\Gamma) $. Thus one derives 
    \begin{equation} \label{eq:h_change disturbance}
        \begin{aligned}
        h(& \phi(T;x(t_k),u)) = h(F(x(t_k),u) + w_k)\\
        &  \geq h(F(x(t_k),u))  -L_1 \bar{w} \\
        &  \geq h(F(\hat{x}_k^\Gamma,u))  -L_1 \bar{w}  - L_1 L \delta' \text{ for some } \Gamma\in \Lambda\\
        & \geq  (1-\gamma)h(\hat{x}_k^\Gamma) + \epsilon  
        \end{aligned}
    \end{equation}
    The first inequality holds because of condition \eqref{eq:w_k bound} and the Lipschitz continuity property of $h$. The second inequality follows from Proposition~\ref{prop:bounded plausible states}. The third inequality is obtained from enforcing the constraint in the safety filter \eqref{eq:ssf disturb}, which is feasible by assumption. From \eqref{eq:ssf disturb}, $h(F(\hat{x}_k^\Gamma,u))  \geq (1-\gamma)h(\hat{x}_k^\Gamma) + \epsilon  $ for each $k$. By recursive reasoning, we know $h(\hat{x}_k^\Gamma)$ remains positive if it starts positive. Thus $h( \phi(T;x(t_k),u))\geq \epsilon$, which implies that $h(\phi(t;x(t_k),u))\geq 0$ for $t\in [t_k,t_{k+1})$. Recursively applying above analysis, we conclude that the closed-loop continuous-time system is safe on set $\myset{C}$. 
\end{proof}

\section{Simulation Results}
We demonstrate our proposed secure safety filter on a unicycle model:
\begin{equation}
    \begin{aligned}
        \dot{p}_1  = v \cos(\theta), 
        \dot{p}_2 = v\sin(\theta),
        \dot{\theta} = \mu,
    \end{aligned}
\end{equation}
where  $p_1\in \mathbb{R},p_2\in \mathbb{R}$ are the $x$-, $y$-coordinate  positions, $\theta \in (-\pi,\pi]$ the heading angle, and $ (v, \mu)\in [-5,5]\times [-2,2]$ the linear and the angular velocities, respectively. Suppose that there are $5$  ``onboard sensors" measuring the $x$- and $y$- coordinate positions,  the heading angle $\theta$, and the relative distance as well as the bearing angle from the origin. The measurement model is given by
\begin{equation*}
\begin{aligned}
y = c(p_1,p_2,\theta) = \begin{pmatrix}
    p_1, p_2, \sqrt{p_1^2 + p_2^2},  \textup{atan2}(p_2, p_1), \theta
\end{pmatrix}
\end{aligned}
\end{equation*}
where $\textup{atan2}(p_2, p_1)\in (-\pi,\pi]$ returns the bearing angle from $(0,0)$ to $(p_1, p_2)$. One verifies that, if the unicycle does not stay still or always move horizontally or vertically, hereafter referred to as the observability singularity cases, the state of the continuous-time system can be uniquely determined by any $3$ out of the total $5$ sensor measurements and their first-order time derivatives. Thus, by tuning the sampling time $T$ and the data length $(l+1)$, the sampled-data system with inexact state transition map is $2$-sparse $\delta$-bounded observable when the system is not in the observability singularity cases. In the simulation, we take $T = 0.01$s, $l = 25$ steps.

\begin{figure}[th]
    \centering
    \includegraphics[width=\linewidth]{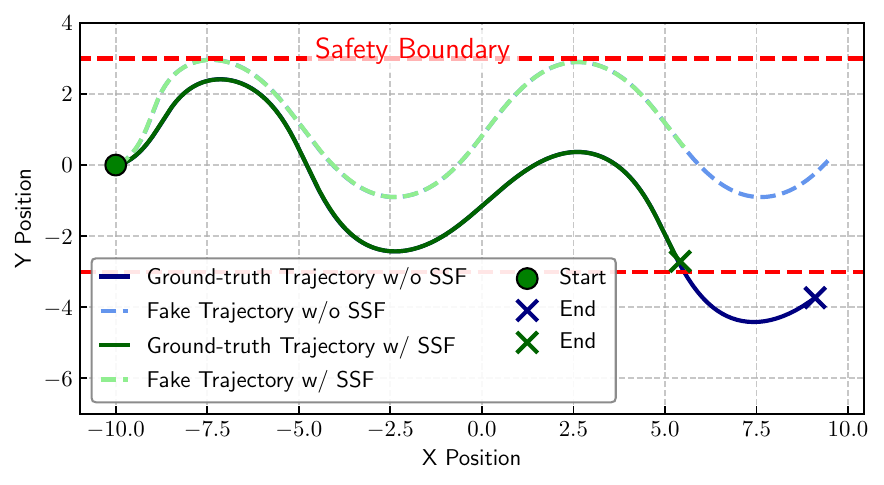}
    \caption{Unicycle trajectories with and without secure safety filter}
    \label{fig:traj}
\end{figure}

\begin{figure}[th]
    \centering
    \includegraphics[width=\linewidth]{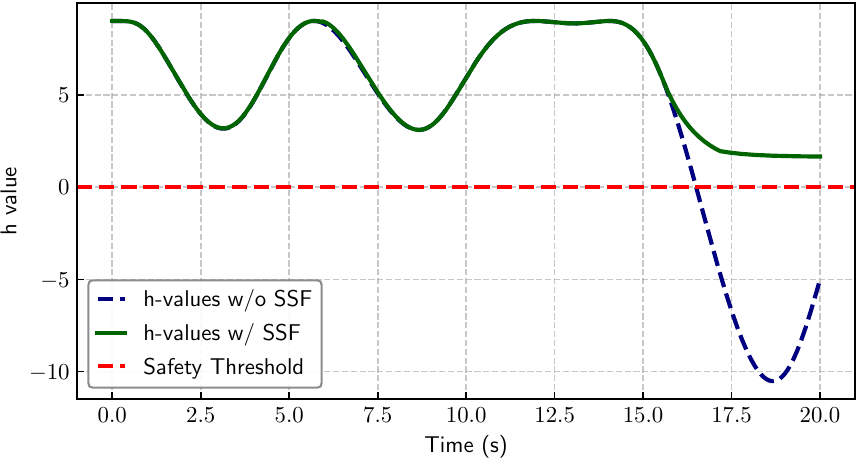}
    \caption{History of CBF values with and without secure safety filter}
    \label{fig:h_history}
\end{figure}

\begin{figure}[th]
    \centering
    \includegraphics[width=\linewidth]{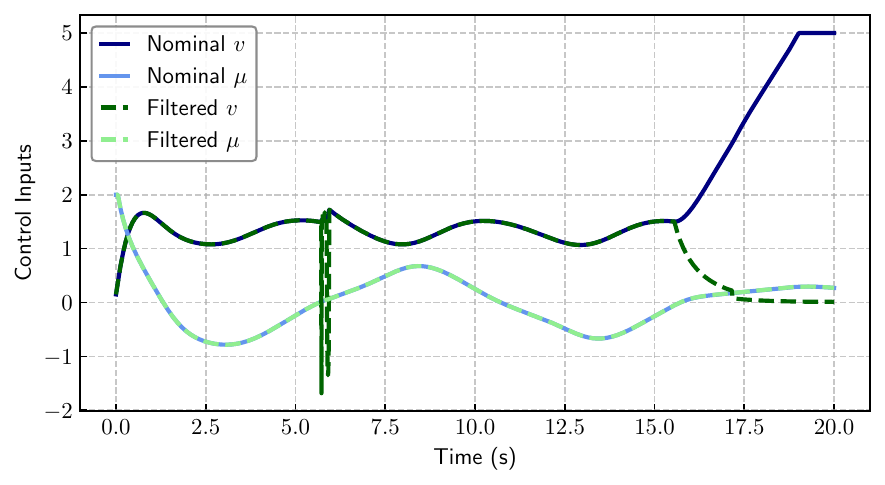}
    \caption{History of nominal input and safe input}
    \label{fig:u_history}
\end{figure}

In this simulation\footnote{An implementation code is available at \url{https://github.com/xiaotan-git/ssf_nonlinear_systems}.},  sensors $1$ and $2$ are subject to spoofing attacks, which is unknown to system designers. The attacked signals are generated as follows. Let $x_0 = (p_1(0), p_2(0),\theta(0) ) = (-10,0.0,-0.1)$ be the actual initial state at time $t = 0$s. The attacker tries to convince the controller a fake initial state $x_{\text{fake},0} = (-10,0.0,0.1)$. Based on input signals and system dynamics, the attacker simulates a fake system trajectory $x_{\text{fake}}(t)$ starting from $x_{\text{fake},0}$. For an attacked sensor $i$, the attacked signal $e^i(t)$ is designed such that the corresponding measurement $y^i(t) = c_i(x_{\text{fake}}(t))$.

We consider a scenario where the nominal control signal is computed remotely and only has access to $y^1(t)$ and $y^2(t)$. The nominal control signal aims to drive the unicycle to follow a curved path $(a_0 t + a_1, a_2 \sin(a_3 t + a_4) + a_5) $ with $(a_0, a_1,a_2,a_3,a_4,a_5) = (1.0,-10,1.8,\pi/5,0.0,1.0)$. See the curve  Fake Trajectory w/o SSF in Figure \ref{fig:traj} for an illustration.  The onboard safety filter has access to all $5$ measurements and corrects the nominal control on-the-fly. The safety constraint is to keep the unicycle within a horizontal band $\{ x=(p_1,p_2,\theta): h(x) = 3^2 - p_2^2 \geq 0\}$.

 In order to compute the observability map $\mathcal{L}_{\Gamma}^D$ for all sensor combinations $\Gamma\in \mathbb{C}_{5}^{3}$, we first derive analytical mappings from any $3$ sensor measurements and their first-order derivatives to the state, and then apply the Gaussian estimator from \cite{silvestre2024nonlinear} to obtain a numerical approximation of first-order time derivatives. After obtaining an initial state estimate, we then conduct consistency check by propagating the initial state estimate along system dynamics, compute the largest measurement error, and compare it to a threshold. 
 
 The closed-loop simulation results are reported in Figures \ref{fig:traj}, \ref{fig:h_history} and \ref{fig:u_history}. As shown in Figure \ref{fig:traj}, without applying our secure safety filter, the remote controller believes that the system (corresponding to the fake trajectory) behaves as expected, yet the unicycle in reality violates safety constraint. When our proposed secure safety filter is in place, we see that the unicycle movement is automatically corrected to be confined within the safety band. The history of zero-order CBF values in these two scenarios is shown in Figure~\ref{fig:h_history}. As seen from Figure~\ref{fig:u_history}, the secure safety filter takes effect only when the system is close to the safety boundary.

\section{Conclusion}

In this paper, we propose a secure safety filter design for general nonlinear systems under sensor spoofing attacks. Our approach extends secure safety filter design beyond linear systems by introducing exact and relaxed observability maps that abstract specific state estimation algorithms. We show how to generalize these observability maps to conduct secure state estimation. The secure safety filter is then designed by incorporating the secure state reconstructor with a control barrier function-based safety filter. Theoretical safety guarantees are provided for general nonlinear systems in the presence of sensor attacks. Finally, we validate the theoretical results numerically on a unicycle vehicle.

\bibliographystyle{IEEEtran}
\bibliography{IEEEabrv,references}

\end{document}